\def\BibTeX{{\rm B\kern-.05em{\sc i\kern-.025em b}\kern-.08em
		T\kern-.1667em\lower.7ex\hbox{E}\kern-.125emX}}
\tikzset{>=latex}
\tikzset{%
block/.style    = {draw, thick, rectangle, minimum height = 3em,
	minimum width = 3em},
block1/.style    = {draw, thick, rectangle, minimum height = 1.7em,
	minimum width = 1.7em,fill=gray!70},
block2/.style    = {draw, thick, rectangle, minimum height = 1.7em,
	minimum width = 1.7em,fill=gray!30},
sum/.style      = {draw, circle, node distance = 1.8cm}, 
}
\title{Network Realization Functions for \\  Optimal Distributed Control\vspace{-2mm}}
\author{
\c{S}erban Sab\u{a}u$^\sharp$,\thanks{$\sharp$ \c{S}erban Sab\u{a}u
	is with the Electrical and Computer Engineering Department, Stevens Institute of Technology, Hoboken, NJ 07030 USA (e-mail:
	ssabau@stevens.edu).} Andrei {S}peril\u{a}$^\ddag$\thanks{$\ddag$ Andrei {S}peril\u{a} and Cristian
	Oar\u{a} are with the Faculty of Automatic Control and Computers,
	``Politehnica" University of Bucharest, Sector 6, 060042 Romania (e-mails: \{andrei.sperila, cristian.oara\}@upb.ro). }, Cristian Oar\u{a}$^\ddag$ 
and Ali Jadbabaie$^\flat$ \thanks{$\flat$ Ali Jadbabaie is with the Institute for Data, Systems and Society, Massachusetts Institute of Technology (MIT), Cambridge, MA 02139 USA (email: jadbabai@mit.edu).}
	\thanks{The work of \c Serban Sab\u au was supported by the NSF - CAREER under Grant 1653756. The work of Andrei Speril\u a and Cristian Oar\u a was supported by a grant of the Ministry of Research, Innovation and Digitization, CCCDI - UEFISCDI, project no. PN-III-P2-2.1-PED-2021-1626, within PNCDI III. The work of Ali Jadbabaie was supported by the Vannevar Bush Fellowship from the Office of the Secretary of Defense.}
\vspace{-10mm}
}
\newcommand{\norm}[1]{\lVert#1\rVert}
\renewcommand{\tilde}{\widetilde}
\newcommand{\FF}{{{\rm I \kern -0.2em R}}}
\newcommand{\RR}{{{\rm I \kern -0.2em R}}}
\newcommand{\CC}{{{\mbox{\rm \hspace*{0.05ex}
			\rule[.18ex]{.18ex}{1.24ex} \kern -.65em C}}}} 
\newcommand{\bea}{\begin{eqnarray}}
\newcommand{\eea}{\end{eqnarray}}
\newtheorem{theorem}{Theorem}[section]
\newtheorem{rem}[theorem]{Remark}
\newtheorem{lem}[theorem]{Lemma}
\newtheorem{coro}[theorem]{Corollary}
\newtheorem{defn}[theorem]{Definition}
\newtheorem{assumption}[theorem]{Assumption}
\newcommand{\ba}{\left[ \begin{array}}
\newcommand{\baa}{\begin{array}}
	\newcommand{\ea}{\end{array} \right]}
\newcommand{\eaa}{\end{array}}
\newcommand{\be}{\begin{equation}}
\newcommand{\ee}{\end{equation}}
\newcommand{\la}{\lambda} 
\def\math#1{\ifmmode{#1} \else {$#1$}\fi}
\newcommand{\sg}{\ifmmode \Sigma \else $\Sigma$ \fi}
\date{}
\begin{document}
	\maketitle
	\thispagestyle{empty}
	
	\begin{abstract} 
		In this paper, we discuss a distributed control architecture, aimed at networks with linear and time-invariant dynamics, which is amenable to convex formulations for controller design. The proposed approach is well suited for large scale systems, since the resulting feedback schemes completely avoid the exchange of internal states, i.e., plant or controller states, among sub-controllers. Additionally, we provide state-space formulas for these sub-controllers, able to be implemented in a distributed manner.\vspace{-4mm}
	\end{abstract}
	
	\begin{IEEEkeywords}
		Distributed control, linear time-invariant networks, scalable implementations.\vspace{-3mm} 
	\end{IEEEkeywords}
	
	\section{Introduction} \label{Introdc}\vspace{-2mm}
	
	\subsection{Scope of Work}\vspace{-1mm}
	
	The multi-faceted intricacies of the optimal decentralized control problem are widely recognized in literature. With the hope for the existence of any convenient (let alone convex) parameterizations dispelled (see, for example, \cite{DateChow}), recent research advances have resorted to modern convexification or regularization methods, such as \cite{Lavaei,Jovanovic, reg, Sznaier, VoulgarisStudent}. In this context, the so-called {\em System Level Synthesis} ({SLS}) methods from \cite{matni1,matni2} provided an insightful perspective on distributed controller design, by exploiting the classical work from \cite{BBS}. The connections between SLS and classical parameterizations of stabilizing controllers have been further elaborated upon in \cite{Luca1,Luca3} and, more recently, in \cite{Tseng} (see also the literature review from the introduction of our companion paper \cite{aug_sparse}).
	
	However, the SLS framework: \emph{(a)} necessitates implementations which communicate internal states, \emph{i.e.}, controller or plant states, thus producing Transfer Function Matrices ({TFM}) of the controller's representation with dimensions equal to that of the plant's state vector
	, while \emph{(b)} allowing for the direct application (see the concluding remark of Section II-B in \cite{JA}) of the scalable, \emph{specialized implementations} from Section III-C of \cite{matni2} only for networks which have open-loop stable plants.\vspace{-3mm}  
	
	\subsection{Problem Statement and Contributions}\vspace{-1mm}
	
	The paper tackles the problem of obtaining a set of stabilizing and distributed control laws of the type
	$$
		u={\bf \Phi} u+ {\bf \Gamma} z,
	$$
	where the pair of LTI (Linear Time Invariant) filters $(\mathbf{\Phi},\mathbf{\Gamma})$ showcase pre-specified sparsity patterns and are used to compute the command signals, $u$, via the regulated measurements, $z$, without communicating any internal states between sub-controllers. The main contributions of our work are as follows:
	\begin{enumerate}
		\item providing a general method of obtaining distributed control laws that are akin to the \emph{specialized implementations} presented in Section III-C of \cite{matni2};
		
		\item bypassing the drawbacks mentioned in points \emph{(a)} and \emph{(b)}, while guaranteeing the full scalability of the distributed control laws, for possibly unstable plants;
		
		\item obtaining control laws for both discrete- and continuous-time (see \cite{ aug_sparse,TAC2016}) systems, while completely avoiding any self-loops (either integrators or delay elements) on the control signals, thus facilitating implementation.
	\end{enumerate}
	We point out that the proposed method relies on the concept of Network Realization Functions ({NRF}), heavily inspired by the work in \cite{DSFARX} and \cite{DSF}, which is able to impose sparsity patterns directly on the distributed controller's coprime factors. The close affinity between these factorizations and NRFs (see \cite{SRTR}) enables the exploitation of the robust stabilization machinery for distributed controller design, as established in \cite{aug_sparse}.
	
	
	\vspace{-4mm}
	\subsection{Paper Structure}
	
	In Section \ref{sec:prelim}, we present the concept of NRF pairs, all while showcasing their defining traits. In Section \ref{sec:NRF}, we discuss the means of enforcing sparsity patterns upon an NRF pair, in the same vein as the approach from \cite{SI}, and we offer guarantees of closed-loop stability when implementing controllers via this formalism. We also present the means to obtain distributed state-space implementations for our controllers' NRF pairs, in contrast to the purely TFM-based perspective from \cite{SI}, and we show that norm-based optimal design in the NRF paradigm reduces to an affine model matching problem (see also \cite{aug_sparse}). In Section \ref{sec:alt}, we discuss alternative representations for our distributed control laws, akin to and inspired by the ones in \cite{matni1} and \cite{matni2}, which use closed-loop maps to compute the command signals. We show that, unlike the architecture from Section \ref{sec:NRF}, these cannot ensure closed-loop stability if the plant does not satisfy certain stability assumptions (as in Section III-C of \cite{matni2}). Section \ref{sec:ex} showcases a numerical example (see also Section V of \cite{aug_sparse}), while Section \ref{sec:fin} contains a number of conclusions.
	
	\vspace{-3mm}
	
	\section{General Setup and Technical Preliminaries}\label{sec:prelim}\vspace{-1mm}
	
	\subsection{Notation}
	
	Since the enclosed results are valid for both continuous- and discrete-time LTI systems, we denote by $\la$ the complex variable associated with the Laplace transform for continuous-time systems, or with the $\mathcal{Z}$-transform for discrete-time ones.
	
	We denote by $\mathbb{R}$ the set of real numbers and by $\mathbb{N}$ the set of natural ones, while $\mathbb{C}$ stands for the complex plane. Let $\mathbb{R}^{p\times m}$ be the set of $p \times m$ real matrices and $\mathbb{R}(\lambda)^{p\times m}$ the set of  $p \times m$ TFMs, matrices with real-rational functions as entries. Let $e_i$  stand for the $i^{\text th}$ vector in the canonical basis of $\mathbb{R}^{m\times 1}$. A TFM for which $\lim_{\lambda\rightarrow\infty} \mathbf{G}(\lambda)$ has only finite entries is called proper and it is called strictly proper when $\lim_{\lambda\rightarrow\infty} \mathbf{G}(\lambda)=O$. We denote by $\mathbb{R}_p(\lambda)^{p\times m}$ the set of proper TFMs. Finally, let $\mathbb{S}$ denote the domain of stability, which is either the open left-half plane ($\text{Re}(\lambda)<0,\ \lambda\in\mathbb{C}$) for continuous-time systems or the open unit disk ($|\lambda|<1,\ \lambda\in\mathbb{C}$) for discrete-time ones. 

	\subsection{TFM and Realization Theory}

	The LTI systems considered in this paper are described in the time domain by the classical state-space equations
	\begin{subequations}
		\begin{align} 
			\sigma{ x}  &=  A x  +  Bu  , & \label{ss0ab} \\ 
			y &=  C  x  + D u, & \label{ss0c}
		\end{align}
	\end{subequations}
	\noindent  where  $A\in\mathbb{R}^{n\times n}$, $B\in\mathbb{R}^{n\times m}$, $C\in\mathbb{R}^{p\times n}$ and $D\in\mathbb{R}^{p\times m}$, and $n$ is called the {\em order} of the realization. The time-domain operator denoted by $\sigma$ in \eqref{ss0ab}-\eqref{ss0c} stands for either the time derivative (in the continuous-time context) or the forward unit-shift (in the discrete-time case). For any $n$-dimensional state-space realization \eqref{ss0ab}-\eqref{ss0c}, the system's TFM is given by
	\begin{equation*}
		\mathbf G(\lambda) =  \left[\begin{array}{c|c}A-\lambda I_n & B \\ \hline C & D \end{array}\right] := D + C(\lambda I_n - A)^{-1}B.
	\end{equation*}


	
	We denote by $\mathcal{P}_u(\mathbf{G})$ the collection of (finite) unstable poles (see Section 6.5.3 of \cite{Kai}), \emph{i.e.}, located in $\mathbb{C}\backslash\mathbb{S}$, which belong to $\mathbf{G}\in\mathbb{R}_p(\la)^{p \times m}$. We refer to $\mathbf{G}\in\mathbb{R}_p(\la)^{p \times m}$ as {\em stable} if $\mathcal{P}_u(\mathbf{G})=\{\emptyset\}$ and we denote the set of these TFMs by $\mathbb{R}_{\mathbb{S}}(\lambda)^{p\times m}(\subset\mathbb{R}_p(\lambda)^{p\times m})$. Otherwise, we say that it is \emph{unstable}. Note that $\mathcal{P}_u(\mathbf{G})$ includes repeated terms when the unstable poles of $\mathbf{G}$ have multiplicities greater than $1$. Furthermore, we denote by $\Lambda_u(A)$ the collection of eigenvalues belonging to the matrix $A$ which are unstable, \emph{i.e.}, located in $\mathbb{C}\backslash\mathbb{S}$. Once again, $\Lambda_u(A)$ includes repeated terms when the unstable eigenvalues of $A$ have multiplicities greater than $1$.

		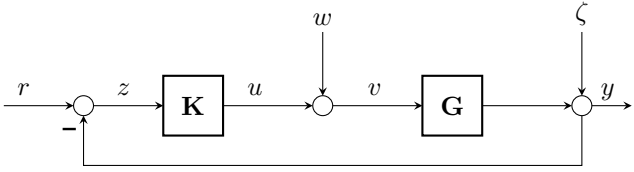
\begin{figure}[t]
			\begin{tikzpicture}[scale=0.265]
				\draw[xshift=0.1cm, >=stealth ] [->] (0,0) -- (3.5,0);
				\draw[ xshift=0.1cm ]  (4,0) circle(0.5);
				\draw[xshift=0.1cm] (3,1)   node {\bf{ }} (1,0.8) node {$r$};
				\draw [xshift=0.1cm](6,0.8)   node {$z$} ;
				\draw[ xshift=0.1cm,  >=stealth] [->] (4.5,0) -- (8,0);
				\draw[ thick, xshift=0.1cm]  (8,-1.5) rectangle +(3,3);
				\draw [xshift=0.1cm](9.5,0)   node {{${\bf K}$}} ;
				\draw[ xshift=0.1cm,  >=stealth] [->] (11,0) -- (15.5,0);
				\draw[ xshift=0.1cm ]  (16,0) circle(0.5cm);
				\draw [xshift=0.1cm](12.6,0.8)   node {$u$} ;
				\draw [xshift=0.1cm](18.6,0.8)   node {$v$} ;
				\draw [xshift=0.1cm] (14.8,0.7)   node {\bf{ }};
				\draw[  xshift=0.1cm,  >=stealth] [->] (16,3.7) -- (16,0.5);
				\draw [xshift=0.1cm] (16,3.6)  node[anchor=south] {$w$}  (15.3,1.5)  node {\bf{ }};
				\draw[  xshift=0.1cm,  >=stealth] [->] (16.5,0) -- (21,0);
				\draw[ thick, xshift=0.1cm ]  (21,-1.5) rectangle +(3,3) ;
				\draw [xshift=0.1cm] (22.5,0)   node {{${\bf G}$}} ;
				\draw[ xshift=0.1cm,  >=stealth] [->] (24,0) -- (28.5,0);
				\draw[ xshift=0.1cm ] (29,0)  circle(0.5);
				\draw [xshift=0.1cm] (29,3.6)  node[anchor=south] {$\zeta$}  (28.5,1.5)  node {\bf{ }};
				\draw [xshift=0.1cm] (28,0.7)   node {\bf{ }};
				\draw [xshift=0.1cm] (30.3,0.7)   node {$y$};
				\draw[  xshift=0.1cm,  >=stealth] [->] (29.5,0) -- (31.5,0);
				\draw[  xshift=0.1cm,  >=stealth] [->] (29,3.7) -- (29,0.5);
				\draw[ xshift=0.1cm,  >=stealth] [->] (29,-0.5) -- (29,-3) -- (4,-3)-- (4, -0.5);
				\draw [xshift=0.1cm] (3.3,-1.3)   node {\bf{--}};
				\useasboundingbox (0,0.1);
			\end{tikzpicture}
			\caption{Feedback loop of the plant $\bf G$ with the controller $\bf K$\vspace{-3mm}}
			\label{2Block}
		\end{figure}
		
		\subsection{Standard Unity Feedback} 


	We focus on the standard unity feedback   of Fig.~\ref{2Block}, where ${\bf G}\in \mathbb{R}_p(\la)^{p \times m}$ is the LTI plant and ${\bf K}\in \mathbb{R}_p(\la)^{m \times p}$ is its LTI controller. Here $r$, $w$ and $\zeta$ are the reference signal, input disturbance and sensor noise vectors, respectively, while $y$, $u$, $z$ and $v$ are the measurement, command, regulated and applied control signal vectors, respectively.  
	If all the closed-loop maps from the exogenous signals $\begin{bmatrix}
		r^\top & w ^\top & \zeta^\top \end{bmatrix}^\top$ to $\begin{bmatrix}
		y^\top & u^\top & z^\top & v^\top \end{bmatrix}^\top$, \emph{i.e.}, any point inside the feedback loop of Fig.~\ref{2Block}, are stable then we say that ${\bf K}$ is an (internally)  stabilizing controller of ${\bf G}$ or that ${\bf K}$ (internally) stabilizes ${\bf G}$.


	\subsection{The Youla Parametrization}\label{2}
	
	The seminal work presented in \cite{V} highlighted the fact that the problem of closed-loop stabilization can be conveniently reformulated in terms of particular fractional representations, belonging to the TFMs which make up the feedback control scheme. In this context, we introduce several key definitions.
	
	\begin{defn} \label{14Martie2019} 
		Let ${\bf K} \in \mathbb{R}_p(\la)^{m \times p}$, ${\bf R}\in \mathbb{R}_p(\la)^{m \times m}$, ${\bf P} \in \mathbb{R}_p(\la)^{m \times p}$. A representation of the form ${\bf K}={\bf R}^{-1}{\bf P}$ is called a {\em left factorization} of {\bf K}.
	\end{defn}
	
	\begin{rem}\label{rem:inv}
		If ${\bf K}={\bf Y}^{-1}{\bf X}$ is a left factorization of ${\bf K}$, then  any other left factorization of ${\bf K}$ (as in Definition~\ref{14Martie2019}) 
		is of the form ${\bf R}={\bf UY}$, ${\bf P}={\bf UX}$, for some invertible TFM ${\bf U}$.
	\end{rem}
	
	\begin{defn}[Corollary 4.1.4 in \cite{V}]
		A left factorization ${\bf G} = \tilde{\bf M}^{-1}\tilde{\bf N}\in \mathbb{R}_p(\la)^{p \times m}$ with $\tilde{\bf N}\in \mathbb{R}_{\mathbb{S}}(\la)^{p \times m}$, $ \tilde {\bf M} \in \mathbb{R}_{\mathbb{S}}(\la)^{p \times p} $ is additionally called \emph{coprime over $\mathbb{R}_{\mathbb{S}}(\la)$} if there exist $\tilde{ \bf X} \in \mathbb{R}_{\mathbb{S}}(\la)^{m \times p}$,\hspace{-1mm} $\tilde {\bf Y} \in \mathbb{R}_{\mathbb{S}}(\la)^{p \times p}$ so that $\tilde{\bf M} \tilde{\bf Y} + \tilde{\bf N}\tilde{\bf X} =I_p$.
	\end{defn}

	\begin{rem}
		Note also that a \emph{right coprime factorization over} $\mathbb{R}_{\mathbb{S}}(\la)$ of $\mathbf{G}$ can be obtained in a straightforward fashion by simply transposing the left coprime factors over $\mathbb{R}_{\mathbb{S}}(\la)$ of $\mathbf{G}^\top$.
	\end{rem}
	
	
	\begin{defn}(Remark 4.1.17 in \cite{V})  
		A collection of eight  stable TFMs $\big({\bf M}, {\bf  N}, \tilde {\bf  M}, \tilde {\bf  N}, {\bf X}, {\bf  Y}, \tilde {\bf  X}, \tilde {\bf  Y}\big)$ is called a  {\em doubly coprime factorization} (DCF) over $\mathbb{R}_{\mathbb{S}}(\lambda)$  of ${\bf  G}\in \mathbb{R}_p(\la)^{p \times m}$  if   $\tilde {\bf  M}$ and ${\bf M}$ are both invertible, they yield the
		factorizations
		${\bf G}=\tilde{\bf M}^{-1}\tilde{\bf N}={\bf NM}^{-1}$ and they satisfy the following equality
		\begin{equation}\label{dcrel}
			\small\ba{cc}   \phantom{-}{\bf Y} & {\bf X}  \\ - \tilde {\bf N} & \tilde {\bf M}  \ea
			\small\ba{cc}  {\bf M} & -\tilde {\bf X} \\   {\bf N} &  \phantom{-}\tilde {\bf Y} \ea = I_{m+p}.\normalsize
		\end{equation}
	\end{defn}
	
	The next theorem provides, via such DCFs over $\mathbb{R}_{\mathbb{S}}(\lambda)$, the  parameterization of \emph{all} stabilizing controllers for a given plant.

	\begin{theorem}[Theorem 5.2.1 in \cite{V}]
		\label{Youlaaa}
		 Let  $\big({\bf M}, {\bf  N}$, $\tilde {\bf  M}, \tilde {\bf  N}$, ${\bf X}, {\bf  Y}$, $\tilde {\bf  X}, \tilde {\bf  Y}\big)$ be a DCF over $\mathbb{R}_{\mathbb{S}}(\lambda)$ of ${\bf G}\in \mathbb{R}_p(\la)^{p \times m}$. Define
		\begin{equation} \label{EqYoula4}
			\begin{split}
				{\bf X_Q}  :=  {\bf X}+{\bf Q} \tilde{\bf M}, \quad
				&{\bf \tilde{X}_Q}  :=  \tilde{\bf X}+{\bf MQ}, \quad \\
				{\bf Y_Q}  :=  {\bf Y} - {\bf Q} \tilde{\bf N}, \quad 
				&{\bf \tilde{Y}_Q}  :=  \tilde{\bf Y}-{\bf NQ},  
			\end{split}
		\end{equation}
		for some ${\bf Q}\in\mathbb{R}_{\mathbb{S}}(\la)^{m\times p}$, and note also that the following generalization of the identity from \eqref{dcrel} holds
		\begin{equation}\label{dcrelQ}
			\! \small\ba{cc} \!  {\bf Y_Q} &   {\bf X_Q} \\  -\tilde {\bf N}  &  \tilde {\bf M}  \! \ea \!
			\small\ba{cr} \!  {\bf M}   & -\tilde {\bf X}_{\bf Q}   \\   {\bf N}   & \tilde {\bf Y}_{\bf Q}  \!  \ea = I_{m+p}\ .
			\normalsize
		\end{equation}
		Then, 
		the class of all controllers ${\bf K_Q}$ which stabilize the plant ${\bf G}$ (in feedback interconnection, see Fig.~\ref{2Block}) is given by
			\begin{equation}
				\label{YoulaEq}
				{\bf K_Q}={\bf Y}^{-1}_{\bf Q}{\bf X_Q} = {\bf \tilde{X}_Q}  {\bf \tilde{Y}}^{-1}_{\bf Q}\in\mathbb{R}(\la)^{m\times p},
			\end{equation}
			for all ${\bf Q}\in\mathbb{R}_{\mathbb{S}}(\la)^{m\times p}$ which ensure that both ${\bf Y_Q}$ and ${\bf \tilde{Y}_Q}$ are invertible TFMs.
			
	\end{theorem}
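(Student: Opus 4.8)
The plan is to establish part $\mathbf{(a)}$ by a direct block-by-block computation and then to read off part $\mathbf{(b)}$ from it, importing only the classical characterization of internal stability in terms of coprime factors. First I would unpack the hypothesis \eqref{dcrel} into its four constituent identities, $\mathbf{Y}\mathbf{M}+\mathbf{X}\mathbf{N}=I_m$, $\mathbf{X}\tilde{\mathbf{Y}}-\mathbf{Y}\tilde{\mathbf{X}}=0$, $\tilde{\mathbf{M}}\mathbf{N}-\tilde{\mathbf{N}}\mathbf{M}=0$, and $\tilde{\mathbf{N}}\tilde{\mathbf{X}}+\tilde{\mathbf{M}}\tilde{\mathbf{Y}}=I_p$. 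Substituting the definitions \eqref{EqYoula4} into the four blocks of the left-hand side of \eqref{dcrelQ} and expanding, the $(2,1)$ and $(2,2)$ blocks reduce immediately to the third and fourth identities, while in the $(1,1)$ and $(1,2)$ blocks every term carrying a factor of $\mathbf{Q}$ is annihilated by one of the four relations (the quadratic term $\mathbf{Q}(\tilde{\mathbf{N}}\mathbf{M}-\tilde{\mathbf{M}}\mathbf{N})\mathbf{Q}$ by the third, and the linear contributions $-(\mathbf{Y}\mathbf{M}+\mathbf{X}\mathbf{N})\mathbf{Q}$ and $\mathbf{Q}(\tilde{\mathbf{N}}\tilde{\mathbf{X}}+\tilde{\mathbf{M}}\tilde{\mathbf{Y}})$ cancelling against one another). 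This verification is routine and requires no new ideas.

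Before turning to stability, I would note that the $(1,2)$ block of \eqref{dcrelQ} states $\mathbf{Y}_{\mathbf{Q}}\tilde{\mathbf{X}}_{\mathbf{Q}}=\mathbf{X}_{\mathbf{Q}}\tilde{\mathbf{Y}}_{\mathbf{Q}}$; left-multiplying by $\mathbf{Y}_{\mathbf{Q}}^{-1}$ and right-multiplying by $\tilde{\mathbf{Y}}_{\mathbf{Q}}^{-1}$ yields $\mathbf{Y}_{\mathbf{Q}}^{-1}\mathbf{X}_{\mathbf{Q}}=\tilde{\mathbf{X}}_{\mathbf{Q}}\tilde{\mathbf{Y}}_{\mathbf{Q}}^{-1}$, so the two representations of $\mathbf{K}_{\mathbf{Q}}$ in \eqref{YoulaEq} coincide whenever $\mathbf{Y}_{\mathbf{Q}}$ and $\tilde{\mathbf{Y}}_{\mathbf{Q}}$ are invertible. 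For the sufficiency direction of $\mathbf{(b)}$, I would invoke the standard fact that a controller with left factorization $\mathbf{K}=\mathbf{V}^{-1}\mathbf{U}$ (stable $\mathbf{U},\mathbf{V}$) internally stabilizes $\mathbf{G}=\tilde{\mathbf{M}}^{-1}\tilde{\mathbf{N}}$ if and only if the matrix $\left[\begin{smallmatrix}\mathbf{V}&\mathbf{U}\\-\tilde{\mathbf{N}}&\tilde{\mathbf{M}}\end{smallmatrix}\right]$ is unimodular over the ring of stable proper TFMs. With $\mathbf{V}=\mathbf{Y}_{\mathbf{Q}}$ and $\mathbf{U}=\mathbf{X}_{\mathbf{Q}}$, part $\mathbf{(a)}$ already exhibits a stable inverse, namely $\left[\begin{smallmatrix}\mathbf{M}&-\tilde{\mathbf{X}}_{\mathbf{Q}}\\\mathbf{N}&\tilde{\mathbf{Y}}_{\mathbf{Q}}\end{smallmatrix}\right]$, whose entries are stable because $\mathbf{Q}$ and all the DCF data are stable; hence $\mathbf{K}_{\mathbf{Q}}$ stabilizes $\mathbf{G}$.

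The necessity direction carries the real weight. Starting from an arbitrary stabilizing $\mathbf{K}$, I would select any left coprime factorization $\mathbf{K}=\mathbf{V}^{-1}\mathbf{U}$ with $\mathbf{U},\mathbf{V}$ stable; the stability characterization then forces $\Phi:=\left[\begin{smallmatrix}\mathbf{V}&\mathbf{U}\\-\tilde{\mathbf{N}}&\tilde{\mathbf{M}}\end{smallmatrix}\right]$ to be unimodular. Post-multiplying $\Phi$ by the stable right DCF factor $R:=\left[\begin{smallmatrix}\mathbf{M}&-\tilde{\mathbf{X}}\\\mathbf{N}&\tilde{\mathbf{Y}}\end{smallmatrix}\right]$ and using the bottom row of \eqref{dcrel} gives a block upper-triangular matrix $\left[\begin{smallmatrix}\Delta&\ast\\0&I_p\end{smallmatrix}\right]$ with $\Delta=\mathbf{V}\mathbf{M}+\mathbf{U}\mathbf{N}$; since $R$ is a unit and $\Phi$ is unimodular, $\Delta$ must be a unit as well. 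Renormalizing $(\mathbf{V},\mathbf{U})\mapsto(\Delta^{-1}\mathbf{V},\Delta^{-1}\mathbf{U})$ leaves the quotient $\mathbf{K}=\mathbf{V}^{-1}\mathbf{U}$ and the coprimeness intact while forcing $\Delta=I_m$, so that $\Phi R=\left[\begin{smallmatrix}I_m&\mathbf{Q}\\0&I_p\end{smallmatrix}\right]$ with $\mathbf{Q}:=\mathbf{U}\tilde{\mathbf{Y}}-\mathbf{V}\tilde{\mathbf{X}}$ stable. Right-multiplying this triangular identity by the left DCF factor $R^{-1}=\left[\begin{smallmatrix}\mathbf{Y}&\mathbf{X}\\-\tilde{\mathbf{N}}&\tilde{\mathbf{M}}\end{smallmatrix}\right]$ recovers $\Phi$ and identifies its top row as $\left[\begin{smallmatrix}\mathbf{Y}-\mathbf{Q}\tilde{\mathbf{N}}&\mathbf{X}+\mathbf{Q}\tilde{\mathbf{M}}\end{smallmatrix}\right]=\left[\begin{smallmatrix}\mathbf{Y}_{\mathbf{Q}}&\mathbf{X}_{\mathbf{Q}}\end{smallmatrix}\right]$, whence $\mathbf{K}=\mathbf{Y}_{\mathbf{Q}}^{-1}\mathbf{X}_{\mathbf{Q}}=\mathbf{K}_{\mathbf{Q}}$ for this stable $\mathbf{Q}$.

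I expect the main obstacle to be not the algebra but the clean importation of the internal-stability/unimodularity equivalence over the ring of stable proper TFMs, together with the verification that the renormalization step preserves both the controller and the coprimeness of its factors. Once that lemma is in hand, extracting the stable $\mathbf{Q}$ is purely mechanical, and the invertibility proviso on $\mathbf{Y}_{\mathbf{Q}}$ and $\tilde{\mathbf{Y}}_{\mathbf{Q}}$ is automatically met because $\mathbf{K}$ is a well-defined proper TFM whose constructed left and right denominators are precisely these factors.
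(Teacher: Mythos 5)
Your proposal is correct, but note that the paper does not prove this theorem at all: it is imported verbatim from Vidyasagar \cite[Theorem 5.2.1]{V}, so there is no in-paper proof to compare against. Your argument --- part $\mathbf{(a)}$ by direct expansion against the four Bézout identities of \eqref{dcrel}, and part $\mathbf{(b)}$ via the unimodularity characterization of internal stability, with the triangularization $\Phi R = \left[\begin{smallmatrix} \Delta & \ast \\ 0 & I_p \end{smallmatrix}\right]$ and renormalization $\Delta = I_m$ extracting the stable parameter $\mathbf{Q} = \mathbf{U}\tilde{\mathbf{Y}} - \mathbf{V}\tilde{\mathbf{X}}$ --- is precisely the standard proof from that reference, sound in all essentials (your closing claim that invertibility of $\mathbf{Y}_{\mathbf{Q}}$ forces that of $\tilde{\mathbf{Y}}_{\mathbf{Q}}$ does deserve the one-line justification via $\mathbf{Y}_{\mathbf{Q}}\tilde{\mathbf{X}}_{\mathbf{Q}} = \mathbf{X}_{\mathbf{Q}}\tilde{\mathbf{Y}}_{\mathbf{Q}}$ and the full column rank of the right factor in \eqref{dcrelQ}, but this is routine).
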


	
	\begin{figure*}
		\begin{equation} \label{klap1}
			\mathbf{H}_{CL}({\bf G}, {\bf K}_{{\bf Q}}):=\footnotesize\ba{ccc}  (I_p+{\bf GK_Q})^{-1}{\bf GK_Q} & (I_p+{\bf GK_Q})^{-1}{\bf G} &(I_p+{\bf GK_Q})^{-1}  \\ 
			(I_m+{\bf K_Q}{\bf G})^{-1}{\bf K_Q} & -(I_m+{\bf K_QG})^{-1}{\bf K_QG} &-(I_m+{\bf K_Q}{\bf G})^{-1}{\bf K_Q} \\
			(I_p+{\bf GK_Q})^{-1} & -(I_p+{\bf GK_Q})^{-1}{\bf G} & -(I_p+{\bf GK_Q})^{-1} \\
			(I_m+{\bf K_QG})^{-1} {\bf K_Q}& (I_m+{\bf K_QG})^{-1} & - (I_m+{\bf K_Q G})^{-1}{\bf K_Q}
			\ea\normalsize\tag{6}
		\end{equation}\hrulefill\end{figure*}
	
	Denote by $\mathbf{H}_{CL}({\bf G}, {\bf K}_{{\bf Q}})$ the TFM from $\begin{bmatrix}
		r^\top & w ^\top & \zeta^\top \end{bmatrix}^\top$ to $\begin{bmatrix}
		y^\top & u^\top & z^\top & v^\top \end{bmatrix}^\top$, whose entries are the achievable closed-loop maps produced by stabilizing controllers (\ref{YoulaEq}) and which are given explicitly in (\ref{klap1}), at the top of the next page. One of the chief features of the Youla Parameterization is the fact that it renders all closed-loop maps from the feedback loop in Fig.~\ref{2Block} as \emph{affine expressions} of the free (and stable) parameter $\mathbf{Q}$, as highlighted via the following result.
	
	\begin{coro}[Corollary 5.2.3 in \cite{V}] \label{afinecloop}
		 The set of {\em all} closed-loop maps (\ref{klap1}) achievable via stabilizing controllers (\ref{YoulaEq}) are affine in the Youla parameter ${\bf Q}$ and are, moreover, given by \stepcounter{equation}
		\begin{equation} \label{finallly}
			\small\begin{tabular}{|c |  c|  c|  c|} %
				\hline
				${} $ & ${r}$ & ${w}$ & ${\zeta}$ \\
				\hline
				$y$  & ${\bf N} {\bf X}_{\bf Q}$ & ${\bf N} {\bf Y}_{\bf Q}$  & $I_p- {\bf N} {\bf X}_{\bf Q}$   \\ 
				\hline
				$u$ & ${\bf M} {\bf X}_{\bf Q}$ & ${\bf M} {\bf Y}_{\bf Q}-I_m$ & $-{\bf M} {\bf X}_{\bf Q}$  \\[0.5ex]
				\hline
				$z $  & $I_p-{\bf N} {\bf X}_{\bf Q}$ & $-{\bf N} {\bf Y}_{\bf Q}$ &${\bf N} {\bf X}_{\bf Q}-I_p$ \\[0.5ex]
				\hline
				$ v$  & ${\bf M} {\bf X}_{\bf Q}$ & ${\bf M} {\bf Y}_{\bf Q}$ & $ -{\bf M} {\bf X}_{\bf Q}$ \\
				\hline
			\end{tabular}\normalsize
		\end{equation}
	\end{coro}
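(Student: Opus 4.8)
The plan is to verify each of the twelve entries in the table \eqref{finallly} by directly simplifying the corresponding block of $\mathbb{H}({\bf G},{\bf K}_{\bf Q})$ from \eqref{klap1}, using the coprime factorizations of ${\bf G}$ and the parameterized B\'{e}zout identity \eqref{dcrelQ}. First I would substitute ${\bf K}_{\bf Q}={\bf Y_Q}^{-1}{\bf X_Q}={\bf \tilde{X}_Q}{\bf \tilde{Y}_Q}^{-1}$ together with ${\bf G}=\tilde{\bf M}^{-1}\tilde{\bf N}={\bf N}{\bf M}^{-1}$ into each sensitivity-type operator appearing in \eqref{klap1}, choosing the left or right factorization so that the inverse simplifies. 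For instance, in the $(1,1)$ block I would write $I_p+{\bf G}{\bf K_Q}=I_p+{\bf N}{\bf M}^{-1}{\bf Y_Q}^{-1}{\bf X_Q}$ and factor out the appropriate stable terms; the key observation is that the off-diagonal Bézout relations extracted from \eqref{dcrelQ} let me collapse these rational expressions into products of the stable factors alone.

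The central identity I would lean on is the block relation \eqref{dcrelQ}. Reading off its entries gives, among others, ${\bf Y_Q}{\bf M}+{\bf X_Q}{\bf N}=I_m$ and $-\tilde{\bf N}{\bf M}+\tilde{\bf M}{\bf N}=0$, hence $\tilde{\bf M}{\bf N}=\tilde{\bf N}{\bf M}$, which encodes the compatibility ${\bf G}=\tilde{\bf M}^{-1}\tilde{\bf N}={\bf N}{\bf M}^{-1}$. From the first of these I obtain ${\bf Y_Q}^{-1}=I_m+{\bf Y_Q}^{-1}{\bf X_Q}{\bf N}={\bf M}^{-1}\!\left(I_m-{\bf Y_Q}^{-1}{\bf X_Q}{\bf N}{\bf M}^{-1}\right)^{-1}$-type manipulations, and more usefully the key sensitivity simplifications $(I_m+{\bf K_Q}{\bf G})^{-1}={\bf M}{\bf Y_Q}$ and $(I_p+{\bf G}{\bf K_Q})^{-1}=I_p-{\bf N}{\bf X_Q}$. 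Once these two closed forms are established, every remaining entry follows by multiplying on the left or right by ${\bf G}$ or ${\bf K_Q}$ and applying the factorizations again; for example $(I_m+{\bf K_Q}{\bf G})^{-1}{\bf K_Q}={\bf M}{\bf Y_Q}\,{\bf Y_Q}^{-1}{\bf X_Q}={\bf M}{\bf X_Q}$, which is exactly the $u\!\leftarrow\!r$ and $v\!\leftarrow\!r$ entry, and similarly $(I_p+{\bf G}{\bf K_Q})^{-1}{\bf G}=(I_p-{\bf N}{\bf X_Q}){\bf N}{\bf M}^{-1}={\bf N}{\bf Y_Q}$ after invoking ${\bf Y_Q}{\bf M}+{\bf X_Q}{\bf N}=I_m$.

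Having expressed all twelve entries as products of the stable factors ${\bf M},{\bf N}$ with ${\bf X_Q},{\bf Y_Q}$, the affine dependence on ${\bf Q}$ is immediate: substituting the definitions \eqref{EqYoula4}, namely ${\bf X_Q}={\bf X}+{\bf Q}\tilde{\bf M}$ and ${\bf Y_Q}={\bf Y}-{\bf Q}\tilde{\bf N}$, turns each product such as ${\bf N}{\bf X_Q}={\bf N}{\bf X}+{\bf N}{\bf Q}\tilde{\bf M}$ into a constant stable term plus a term linear in ${\bf Q}$, and this is precisely the claimed affine structure. The main obstacle I anticipate is bookkeeping rather than conceptual: I must consistently pick the correct one of the two equivalent factorizations at each step so that the matrix inverses cancel cleanly, and I must confirm that the two representations ${\bf Y_Q}^{-1}{\bf X_Q}$ and ${\bf \tilde{X}_Q}{\bf \tilde{Y}_Q}^{-1}$ yield the same closed-loop maps, which again reduces to the off-diagonal blocks of \eqref{dcrelQ}. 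A minor additional check is that the derivation uses only the algebraic identity \eqref{dcrelQ} and the invertibility of ${\bf Y_Q},{\bf \tilde{Y}_Q}$, so no stability argument beyond the stability of the eight factors and of ${\bf Q}$ is needed to conclude that the entries lie in $\mathbb{R}_p(\la)$.
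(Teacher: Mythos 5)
Your proposal is correct, and every step checks out: from the $(1,1)$ and $(2,1)$ blocks of \eqref{dcrelQ} one gets $\mathbf{Y}_{\mathbf{Q}}\mathbf{M}+\mathbf{X}_{\mathbf{Q}}\mathbf{N}=I_m$ and $\tilde{\mathbf{M}}\mathbf{N}=\tilde{\mathbf{N}}\mathbf{M}$, whence $I_m+\mathbf{K}_{\mathbf{Q}}\mathbf{G}=\mathbf{Y}_{\mathbf{Q}}^{-1}(\mathbf{Y}_{\mathbf{Q}}\mathbf{M}+\mathbf{X}_{\mathbf{Q}}\mathbf{N})\mathbf{M}^{-1}=(\mathbf{M}\mathbf{Y}_{\mathbf{Q}})^{-1}$, and your two key closed forms $(I_m+\mathbf{K}_{\mathbf{Q}}\mathbf{G})^{-1}=\mathbf{M}\mathbf{Y}_{\mathbf{Q}}$ and $(I_p+\mathbf{G}\mathbf{K}_{\mathbf{Q}})^{-1}=I_p-\mathbf{N}\mathbf{X}_{\mathbf{Q}}$ indeed generate all twelve entries of \eqref{finallly} from \eqref{klap1}, with affineness in $\mathbf{Q}$ immediate upon substituting \eqref{EqYoula4}. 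Note, though, that the paper itself supplies no proof of this corollary --- it is imported wholesale from \cite[Corollary 5.2.3]{V} --- so the meaningful in-paper comparison is with the Appendix proofs of Theorems \ref{IS}, \ref{MR2} and \ref{MR3}, where closed-loop maps of exactly this kind are derived by a different mechanism: the loop equations are premultiplied by $\tilde{\mathbf{M}}$ and by (a scaled) $\mathbf{Y}_{\mathbf{Q}}$, assembled into a single block linear system in $(z,u)$ as in \eqref{temp33}, and the coefficient TFM is inverted in one stroke using the B\'{e}zout identity \eqref{dcrelQ}, producing all maps simultaneously. Your entry-by-entry simplification of the sensitivity operators is more elementary and makes the cancellation mechanism transparent, but it treats one map at a time; the paper's block-system route buys uniformity and, crucially, extends without modification to the perturbed loop carrying the disturbance $\delta_u$, which is precisely why the paper adopts it for its new results while outsourcing the unperturbed table to Vidyasagar. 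One slip to repair in your write-up: the parenthetical identity ``$\mathbf{Y}_{\mathbf{Q}}^{-1}=I_m+\mathbf{Y}_{\mathbf{Q}}^{-1}\mathbf{X}_{\mathbf{Q}}\mathbf{N}$'' is incorrect as written --- left-multiplying $\mathbf{Y}_{\mathbf{Q}}\mathbf{M}+\mathbf{X}_{\mathbf{Q}}\mathbf{N}=I_m$ by $\mathbf{Y}_{\mathbf{Q}}^{-1}$ gives $\mathbf{Y}_{\mathbf{Q}}^{-1}=\mathbf{M}+\mathbf{Y}_{\mathbf{Q}}^{-1}\mathbf{X}_{\mathbf{Q}}\mathbf{N}$ --- but nothing downstream relies on it, since the two sensitivity formulas you actually use are derived correctly; your remaining checks (equality of the left and right controller fractions via the off-diagonal block $\mathbf{X}_{\mathbf{Q}}\tilde{\mathbf{Y}}_{\mathbf{Q}}=\mathbf{Y}_{\mathbf{Q}}\tilde{\mathbf{X}}_{\mathbf{Q}}$, and that only invertibility of $\mathbf{Y}_{\mathbf{Q}},\tilde{\mathbf{Y}}_{\mathbf{Q}}$ plus stability of the eight factors and $\mathbf{Q}$ is needed) are exactly right.
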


	\subsection{Network Realization Functions}
	
	\begin{figure}
		\includegraphics[scale=.7]{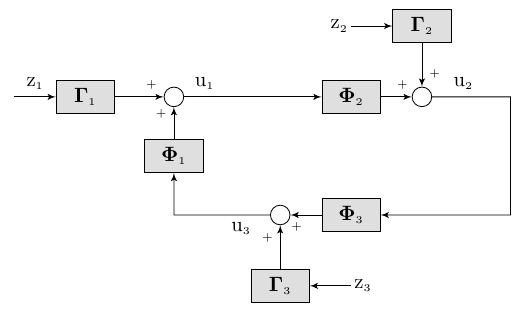}
		\caption{Three-hop network architecture\vspace{-5mm}}
		\label{fig:three_hop}
	\end{figure}

	For descriptive simplicity, we focus on the three-hop ``ring'' network from Fig.~\ref{fig:three_hop} and we describe its signal flow equations
	\begin{equation*} \label{firstimpression}
		\small\ba{c} \hspace{-1mm}u_1 \\ \hspace{-1mm}u_2 \\\hspace{-1mm}u_3 \ea  \hspace{-1.5mm}=\hspace{-1.5mm} \small\ba{ccc} O & \hspace{-2mm}O & \hspace{-2mm}{\bf \Phi}_1 \\ 
		{\bf \Phi}_2  & \hspace{-2mm}O  & \hspace{-2mm}O\\ O &  \hspace{-2mm}{\bf \Phi}_3  &\hspace{-2mm}O\ea\hspace{-2mm} \small\ba{c} \hspace{-1mm}u_1 \\ \hspace{-1mm}u_2 \\ \hspace{-1mm}u_3 \ea \hspace{-1mm}+ 
		\hspace{-1mm}\small\ba{ccc}{\bf \Gamma}_{1} & \hspace{-2mm}O&\hspace{-2mm}O\\ 
		O & \hspace{-2mm}{\bf \Gamma}_{2} &\hspace{-2mm}O\\ O&\hspace{-2mm}O& \hspace{-2mm}{\bf \Gamma}_{3} \ea\hspace{-2mm}  \small\ba{c} \hspace{-1mm}z_1 \\ \hspace{-1mm}z_2\\\hspace{-1mm}z_3 \ea\hspace{-1mm}. \normalsize
	\end{equation*} 
	Notice that the three by three  ${\bf \Phi}$ factor takes the precise meaning of the network's directed graph {\em adjacency matrix}, with the LTI filters ${\bf \Phi}_1, {\bf \Phi}_2$ and ${\bf \Phi}_3$, respectively, having the significance of  weights of their corresponding edges. The remaining three by three ${\bf \Gamma}$ factor has the  role of defining the {\em input terminals} of the network, {\em i.e.}, the points of access (to the network) of the exogenous signals $z_1, z_2$ and $z_3$, respectively.  	
	
	Assuming that $(I_m-{\bf \Phi})$ is invertible, {\em i.e.}, the LTI network from Fig.~\ref{fig:three_hop} is well-posed, the ensuing left factorization   
	\begin{equation} \label{secondimpression}
		\small\ba{c} u_1 \\ u_2 \\u_3 \ea \hspace{-1.5mm} = \hspace{-1.5mm} \small\ba{ccc} \hspace{-2mm}I & \hspace{-2mm}O & \hspace{-2mm}-{\bf \Phi}_1\hspace{-2mm} \\ 
		\hspace{-2mm}-{\bf \Phi}_2  & \hspace{-2mm}I  & \hspace{-2mm}O\hspace{-3mm}\\\hspace{-2mm} O &  \hspace{-2mm}-{\bf \Phi}_3 &\hspace{-2mm}I\hspace{-3mm}\ea ^{\hspace{-0.5mm}-1}\hspace{-1.5mm} \small\ba{ccc}{\hspace{-1mm}\bf \Gamma}_{1} & \hspace{-2mm}O&\hspace{-2mm}O\\ 
		\hspace{-1mm}O & \hspace{-2mm}{\bf \Gamma}_{2} &\hspace{-2mm}O\\ \hspace{-1mm}O&\hspace{-2mm}O& \hspace{-2mm}{\bf \Gamma}_{3} \ea\hspace{-2mm}  \small\ba{c} z_1 \\ z_2\\z_3 \ea\normalsize \normalsize\hspace{-1mm}\normalsize
	\end{equation}
	yields  the Input/Output map (from  $ z$ to ${u}$), which we denote by ${\bf K}$. Note that, in general, the sparsity patterns of the ${\bf \Phi}$ and ${\bf \Gamma}$ factors are completely lost in (\ref{secondimpression}) due to the inversion of  $(I_m - {\bf \Phi})$ that generically yields a ``full'' TFM ${\bf K}$, with no particular sparsity pattern of its own. The distinctive ``structure'' of the LTI network in Fig.~\ref{fig:three_hop}, as captured by the $({\bf \Phi}, {\bf \Gamma})$ pair, cannot in general be retrieved solely from ${\bf K}$ (see also \cite{VanDenHof}).
	
	\begin{rem}
		This type of architecture has been successfully employed for the control laws proposed in \cite{TAC2016}, with the aim of eliminating downstream disturbance propagation for a platoon of autonomous vehicles. Notably, guarantees of closed-loop norm optimality were also obtained in Section V-B of \cite{TAC2016}.
	\end{rem}

	\begin{defn} \label{NRFDEF} Given ${\bf K}\in\mathbb{R}_p(\lambda)^{m \times p}$, ${\bf \Phi}\in\mathbb{R}_p(\lambda)^{m \times m}$ and ${\bf \Gamma}\in\mathbb{R}_p(\lambda)^{m \times p}$ such that ${\bf \Phi}$ has all its diagonal entries equal to zero and ${\bf K} = (I_m-{\bf \Phi})^{-1} {\bf \Gamma}$, the pair of TFMs $({\bf \Phi}, {\bf \Gamma})$ is said to be a {\em Network Realization Function} pair of $\mathbf{K}$.
	\end{defn}
	
	\begin{rem}\label{hollow}
		Notice the fact that any NRF is ultimately a left factorization of ${\bf K}$. For any ${\bf K}\in\mathbb{R}_p(\lambda)^{m \times p}$, let ${\bf K}={\bf R}^{-1}{\bf P}$ be some left factorization (recall Definition~\ref{14Martie2019}). In this case, the gain at infinity of the ``denominator'' TFM, ${\bf R}$, and that of its diagonal component, ${\bf R}^{\text{\em diag}}$, can always be made (recall Remark \ref{rem:inv}) equal to the identity matrix. Therefore, ${\bf R}^{\text{\em diag}}$ will have a proper inverse, from which we get that
		\begin{equation}\label{NRFmain}
			\displaystyle \Big( {\bf \Phi} :=I_m -({\bf R}^{\text{\em diag}})^{-1}{\bf R}\: , \: \; \; \; \; {\bf \Gamma} :=({\bf R}^{\text{\em diag}})^{-1}{\bf P}\Big)
		\end{equation}
		satisfies Definition~\ref{NRFDEF}, making it an NRF pair of ${\bf K}$. Notably, the transformation from \eqref{NRFmain} preserves the sparsity patterns: ${\bf \Phi}$ retains the sparsity pattern of ${\bf R}$ while ${\bf \Gamma}$ retains that of ${\bf P}$.
	\end{rem}
	

	\section{Distributed Control via NRF Implementation}\label{sec:NRF}
	
	\subsection{Specifying Sensing and Communication Constraints}\label{subsec:spec}

	The stated aim of this paper is to investigate distributed implementations of output feedback  controllers as  networks of LTI filters. 
	In the NRF framework, the control law
	\begin{equation}
		\label{imple}
		\begin{array}{cccc}
			u= & \underbrace{ \quad {\bf \Phi}\: u \quad} \:\: & + &\underbrace{\quad {\bf \Gamma}\: z \quad}\ , \\
			& \textrm{feedforward} & \: &\textrm{feedback\phantom{\ ,}}
		\end{array}
	\end{equation}
	which bears striking resemblance to the architecture proposed in (16) from \cite{distrib_ACC},	has a twofold manifestation: firstly, in {\em the sparsity pattern of  the} ${\bf \Phi}$ {\em factor}, by designating which control signals are available, and secondly, in {\em that of the} ${\bf \Gamma}$ {\em factor}, by defining which of the regulated measurements are available.

	
	The communication constraints ${\bf \Phi}\in \mathcal{Y}$ are imposed on the distributed controller by way of pre-specifying the linear subspace $\mathcal{Y} \subseteq \mathbb{R}_p (\la)^{m \times m}$, while the sensing constraints ${\bf \Gamma}\in \mathcal{X}$  are  encapsulated  in the   pre-specified  linear subspace $\mathcal{X}\subseteq \mathbb{R}_p (\la)^{m \times p}$, respectively. 
	The  subspace $\mathcal{Y}^+$ is obtained by  allowing 
	for non-zero diagonal entries on the elements from $\mathcal{Y}$ such that ${\bf Y}\in \mathcal{Y}^+$  $\iff  ({\bf Y}-{\bf Y}^{\text {diag}}) \in \mathcal{Y}$.

	\begin{rem} \label{scale} 
		In the NRF framework, we avoid communicating \emph{internal states}, {\em i.e.}, states of the plant or controller, thus promoting control law implementations that are scalable with respect to the dimension of the plant's state (recall Section \ref{Introdc}).
	\end{rem}
	

	\subsection{Internal Stability Guarantees}\label{subsec:IOstab}

	In the NRF-based  implementation (\ref{imple}) of the controller,  the variable $u$ may be affected by the additive disturbance denoted $\delta_u$, with the equation for the controller from Fig.~\ref{2BlockAgain}  reading as
	\begin{equation}\label{impletrue}
		u={\bf \Phi} (u+\delta_u)+ {\bf \Gamma} z. 
	\end{equation}
	The internal stability analysis must certify that the closed-loop maps from $\delta_u$ to the signals $z,u,v$ and $y$ are all stable.\vspace{-1mm}

	\begin{assumption}\label{as:sp}
		The plant ${\bf G}$ is strictly proper. \vspace{-1mm}
	\end{assumption}\newpage
	
	\begin{rem}\label{rem:scale}
		We point out that Assumption \ref{as:sp} is by no means restrictive, as shown via \cite{aug_sparse}, and has been made only to facilitate the presentation of the NRF design formalism. For example, a direct consequence of Assumption \ref{as:sp} is the fact that the TFMs  ${\bf M, Y, \tilde M, \tilde Y}$ from {\em any} DCF of ${\bf G}$ can be scaled in (\ref{dcrel}) to make their gain at infinity  equal to the identity matrix. Thus, all DCFs of type \eqref{EqYoula4}-\eqref{dcrelQ} which will be employed in the sequel are taken to have the aforementioned property, implying that $\big( \mathbf{Y}_{\mathbf{Q}}^{{\text {\em diag}}}\big)^{-1}\in\mathbb{R}_p(\la)^{m \times m}$ for any $\mathbf{Q}\in\mathbb{R}_{\mathbb{S}}(\la)^{m \times p}$.
	\end{rem} 
	
	\begin{figure}[t]
		\begin{tikzpicture}[scale=0.27]
			\draw[xshift=0.1cm, >=stealth ] [->] (0,0) -- (3.5,0);
			\draw[ xshift=0.1cm ]  (4,0) circle(0.5);
			\draw[xshift=0.1cm] (3,1)   node {\bf{ }} (1,0.8) node {$r$};
			\draw [xshift=0.1cm](6,0.8)   node {$z$} ;
			\draw[ xshift=0.1cm,  >=stealth] [->] (4.5,0) -- (8,0);
			\draw[ thick, xshift=0.1cm]  (8,-1.5) rectangle +(3,3);
			\draw [xshift=0.1cm](9.5,0)   node {{$\hspace{0.5mm}\mathbf{K}_{\mathbf{Q}}$}} ;
			\draw[ xshift=0.1cm,  >=stealth] [->] (11,0) -- (15.5,0);
			\draw[ xshift=0.1cm ]  (16,0) circle(0.5cm);
			\draw [xshift=0.1cm](12.6,0.8)   node {$u$} ;
			\draw[  xshift=0.1cm,  >=stealth] [->] (14,0) -- (14,4)--(10,4);
			\draw [xshift=0.1cm](18.6,0.8)   node {$v$} ;
			\draw [xshift=0.1cm](14,0)   node {$\bullet$} ;
			\draw [xshift=0.1cm] (14.8,0.7)   node {\bf{ }};
			\draw[  xshift=0.1cm,  >=stealth] [->] (16,3.7) -- (16,0.5);
			\draw [xshift=0.1cm] (0.8,4)  node[anchor=south] {$\delta_u$};
			\draw[ xshift=0.1cm ]  (9.5,4) circle(0.5);
			\draw[  xshift=0.1cm,  >=stealth] [->] (9.5,3.5) -- (9.5,1.5);
			\draw[  xshift=0.1cm,  >=stealth] [->] (0,4) -- (9,4);
			\draw [xshift=0.1cm] (16,3.6)  node[anchor=south] {$w$}  (15.3,1.5)  node {\bf{ }};
			\draw[  xshift=0.1cm,  >=stealth] [->] (16.5,0) -- (21,0);
			\draw[ thick, xshift=0.1cm ]  (21,-1.5) rectangle +(3,3) ;
			\draw [xshift=0.1cm] (22.5,0)   node {{${\bf G}$}} ;
			\draw[ xshift=0.1cm,  >=stealth] [->] (24,0) -- (28.5,0);
			\draw[ xshift=0.1cm ] (29,0)  circle(0.5);
			\draw [xshift=0.1cm] (29,3.6)  node[anchor=south] {$\zeta$}  (28.5,1.5)  node {\bf{ }};
			\draw [xshift=0.1cm] (28,0.7)   node {\bf{ }};
			\draw [xshift=0.1cm] (30.3,0.7)   node {$y$};
			\draw[  xshift=0.1cm,  >=stealth] [->] (29.5,0) -- (31.5,0);
			\draw[  xshift=0.1cm,  >=stealth] [->] (29,3.7) -- (29,0.5);
			\draw[ xshift=0.1cm,  >=stealth] [->] (29,-0.5) -- (29,-3) -- (4,-3)-- (4, -0.5);
			\draw [xshift=0.1cm] (3.3,-1.3)   node {\bf{--}};
			\useasboundingbox (0,0.1);
		\end{tikzpicture}
		\caption{Feedback loop of the plant ${\bf G}$ with the controller  ${\bf K}$ in an NRF-based implementation $u={\bf \Phi} (u+\delta_u)+ {\bf \Gamma} z$\vspace{-6mm}}
		\label{2BlockAgain}
	\end{figure}
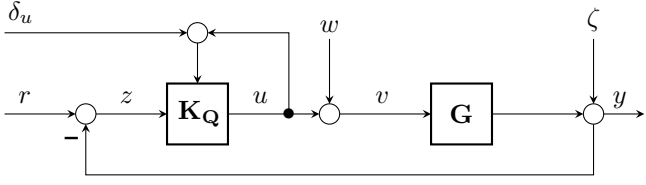
	
	The next theorem shows that Remark~\ref{hollow} offers a natural mechanism to obtain stabilizing NRF-based implementations.

	\begin{theorem} \label{IS}
		Let ${\bf G}\in\mathbb{R}_p(\lambda)^{p\times m}$ be given by one of its DCFs, as in \eqref{dcrel}, and define the $\mathbf{Q}$-parameterized NRF pairs\vspace{-1mm}
		\begin{subequations}
			\begin{equation}
				\label{ISmain1}
				{\bf \Phi}:={I_m-\big(\bf Y_Q^{{\text {\em diag}}}\big)}^{-1}{\bf Y_Q},\vspace{-2mm}
			\end{equation} 
			\begin{equation}
				\label{ISmain2}
				\hspace{-8.5mm}{\bf \Gamma}:={\big( \bf Y_Q^{{\text {\em diag}}}\big)}^{-1}{\bf X_Q},\vspace{-1mm}
			\end{equation} 
		\end{subequations}
		associated with a $\mathbf{K}_{\mathbf{Q}}$, as in \eqref{YoulaEq}. Then, we have that:
		\begin{enumerate}
			\item[$\mathbf{(a)}$] The NRF-based implementation of ${\bf K_Q}$ from \eqref{impletrue} internally stabilizes the feedback loop shown in Fig. \ref{2BlockAgain};
			
			\item[$\mathbf{(b)}$] ${\bf \Phi}\in \mathcal{Y}$, ${\bf \Gamma}\in \mathcal{X}$ if and only if ${\bf Y_Q}\in \mathcal{Y^+}$, ${\bf X_Q}\in \mathcal{X}$.
		\end{enumerate}
	\end{theorem}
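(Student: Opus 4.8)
The plan is to first record that the pair $({\bf \Phi},{\bf \Gamma})$ of \eqref{ISmain1}--\eqref{ISmain2} is precisely the NRF furnished by Remark~\ref{hollow} for the left factorization ${\bf K_Q}={\bf Y_Q}^{-1}{\bf X_Q}$ from \eqref{YoulaEq}, i.e. the instance of \eqref{NRFmain} with ${\bf R}={\bf Y_Q}$ and ${\bf P}={\bf X_Q}$. Hence ${\bf \Phi}$ has zero (block-)diagonal by construction and satisfies ${\bf K_Q}=(I_m-{\bf \Phi})^{-1}{\bf \Gamma}$, while the Remark following Assumption~\ref{as:sp} ensures $({\bf Y_Q}^{\text{diag}})^{-1}$ is a proper invertible TFM, so that $I_m-{\bf \Phi}=({\bf Y_Q}^{\text{diag}})^{-1}{\bf Y_Q}$ is invertible and the implementation $u={\bf \Phi}(u+\delta_u)+{\bf \Gamma}z$ is well posed.

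For part $\mathbf{(a)}$ I would write the signal-flow equations of Fig.~\ref{2BlockAgain}, namely $u={\bf \Phi}(u+\delta_u)+{\bf \Gamma}z$, $z=r-y$, $y={\bf G}v+\nu$ and $v=u+w$, and eliminate the internal variables. Solving the controller equation gives $u={\bf K_Q}z+(I_m-{\bf \Phi})^{-1}{\bf \Phi}\,\delta_u$, so $\delta_u$ enters the ordinary loop exactly as an additive term at the command $u$; closing the loop then produces the map $(I_m+{\bf K_Q}{\bf G})^{-1}(I_m-{\bf \Phi})^{-1}{\bf \Phi}$ from $\delta_u$ to $u$. The decisive step is a pair of algebraic reductions: reading off $(I_m+{\bf K_Q}{\bf G})^{-1}={\bf M}{\bf Y_Q}$ from \eqref{klap1}--\eqref{finallly}, together with $(I_m-{\bf \Phi})^{-1}{\bf \Phi}={\bf Y_Q}^{-1}{\bf Y_Q}^{\text{diag}}-I_m$, which collapses the $\delta_u\to u$ map to ${\bf M}\big({\bf Y_Q}^{\text{diag}}-{\bf Y_Q}\big)$. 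The remaining three maps follow from $v=u+w$, $y={\bf G}v+\nu$ and ${\bf G}{\bf M}={\bf N}$, yielding ${\bf M}({\bf Y_Q}^{\text{diag}}-{\bf Y_Q})$ into $v$, ${\bf N}({\bf Y_Q}^{\text{diag}}-{\bf Y_Q})$ into $y$ and its negative into $z$; each is a product of the stable TFMs ${\bf M},{\bf N},{\bf Y_Q}$ and ${\bf Y_Q}^{\text{diag}}$, hence stable. As $\delta_u=0$ recovers the ordinary feedback loop, the maps from $r,w,\nu$ coincide with the stable closed-loop maps \eqref{finallly} of Corollary~\ref{afinecloop}, and superposition then gives internal stability.

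For part $\mathbf{(b)}$ I would use that $({\bf Y_Q}^{\text{diag}})^{-1}$ is invertible and block-diagonal. Since left multiplication by such a factor neither creates nor annihilates zero (block-)entries, ${\bf \Gamma}=({\bf Y_Q}^{\text{diag}})^{-1}{\bf X_Q}$ carries the same sparsity pattern as ${\bf X_Q}$, so ${\bf \Gamma}\in\mathcal{X}\iff{\bf X_Q}\in\mathcal{X}$. Similarly, the off-diagonal blocks of ${\bf \Phi}=I_m-({\bf Y_Q}^{\text{diag}})^{-1}{\bf Y_Q}$ agree, up to invertible diagonal left factors, with those of ${\bf Y_Q}$, while the diagonal blocks of ${\bf \Phi}$ vanish; this is exactly the sparsity preservation asserted in Remark~\ref{hollow_bis}. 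Thus ${\bf \Phi}\in\mathcal{Y}$ holds iff the off-diagonal part ${\bf Y_Q}-{\bf Y_Q}^{\text{diag}}$ lies in $\mathcal{Y}$, which by the definition of $\mathcal{Y}^+$ is exactly ${\bf Y_Q}\in\mathcal{Y}^+$. Combining the two equivalences yields the stated biconditional.

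The step I expect to be the main obstacle is the cancellation underpinning $\mathbf{(a)}$: $(I_m+{\bf K_Q}{\bf G})^{-1}$ supplies a right factor ${\bf Y_Q}$ and $(I_m-{\bf \Phi})^{-1}{\bf \Phi}$ a left factor ${\bf Y_Q}^{-1}{\bf Y_Q}^{\text{diag}}$, so the only possibly unstable factor ${\bf Y_Q}^{-1}$ telescopes away, leaving a product of stable TFMs. This is exactly where the NRF construction built from ${\bf Y_Q}^{\text{diag}}$ and the strict properness of ${\bf G}$ (which through ${\bf G}(\infty)=0$ secures well-posedness, the identity $(I_m+{\bf K_Q}{\bf G})^{-1}={\bf M}{\bf Y_Q}$, and the invertibility of ${\bf Y_Q}^{\text{diag}}$) are indispensable. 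A secondary point to pin down in $\mathbf{(b)}$ is that the subspaces $\mathcal{X}$ and $\mathcal{Y}$ genuinely detect, and are invariant under, left multiplication by invertible block-diagonal TFMs.
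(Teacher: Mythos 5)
Your proposal is correct and takes essentially the same route as the paper: part $\mathbf{(a)}$ rests in both cases on expressing the closed-loop maps of Fig.~\ref{2BlockAgain} in terms of the DCF factors, with the decisive telescoping $\mathbf{M}\mathbf{Y}_{\mathbf{Q}}\big(\mathbf{Y}_{\mathbf{Q}}^{-1}\mathbf{Y}_{\mathbf{Q}}^{\text{diag}}-I_m\big)=\mathbf{M}\big(\mathbf{Y}_{\mathbf{Q}}^{\text{diag}}-\mathbf{Y}_{\mathbf{Q}}\big)$ eliminating the only unstable factor, and part $\mathbf{(b)}$ on the sparsity-preservation arguments of Remarks~\ref{hollow} and \ref{hollow_bis}, exactly as the paper invokes them. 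The only difference is organizational: the paper obtains all sixteen maps at once by inverting the block system \eqref{temp33} via the B\'{e}zout identity \eqref{dcrelQ} to arrive at \eqref{finallyhappy}, whereas you recover the $r,w,\nu$ columns from Corollary~\ref{afinecloop} by superposition and compute only the $\delta_u$ column directly, and your resulting expressions coincide entry-for-entry with the $\delta_u$ column of \eqref{finallyhappy}.
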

	\begin{proof}
		For point $\mathbf{(a)}$, see the Appendix. Point $\mathbf{(b)}$ follows directly from the arguments made in Remark \ref{hollow}.
	\end{proof}

	\vspace{-6mm}
	\subsection{Realization-based Distributed Implementations}
	
	Distinguishing to the NRF setup, the main result of this section shows that the closed-loop state dynamics from Fig.~\ref{2BlockAgain} can be made stable in a distributed fashion, by implementing stabilizable and detectable realizations for each row of $\begin{bmatrix}
		\mathbf{\Phi}&\mathbf{\Gamma} 
	\end{bmatrix}$. Before extending the input-output stability guarantees given in Section~\ref{subsec:IOstab}, we first require the next preparatory lemma.
	
	
	
	\begin{lem}\label{lem:pole_conserv}
		Let $\mathbf{G}_1$ and $\mathbf{G}_2$ be two proper TFMs, with $\mathbf{G}_2$ stable and having full row normal rank along with no transmission zeros in $\mathbb{C}\backslash\mathbb{S}$. Then, $\mathcal{P}_u(\mathbf{G}_1\mathbf{G}_2)=\mathcal{P}_u(\mathbf{G}_1)$.\vspace{-1mm}
	\end{lem}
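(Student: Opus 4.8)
The plan is to argue entirely at the level of minimal state-space realizations and to reduce the identity $\mathcal{P}_u(\mathbf{G}_1\mathbf{G}_2)=\mathcal{P}_u(\mathbf{G}_1)$ to a pair of PBH rank conditions that must hold at every $\lambda_0\in\mathbb{C}\backslash\mathbb{S}$. First I would fix minimal realizations $(A_1,B_1,C_1,D_1)$ of $\mathbf{G}_1$ and $(A_2,B_2,C_2,D_2)$ of $\mathbf{G}_2$; since $\mathbf{G}_2$ is stable, $\Lambda_u(A_2)=\emptyset$, i.e. every eigenvalue of $A_2$ lies in $\mathbb{S}$. Then I form the standard cascade realization $(A,B,C,D)$ of the product, whose state matrix is block upper-triangular, $A=\left[\begin{smallmatrix} A_1 & B_1 C_2 \\ 0 & A_2 \end{smallmatrix}\right]$, with $B=\left[\begin{smallmatrix} B_1 D_2 \\ B_2 \end{smallmatrix}\right]$, $C=\begin{bmatrix} C_1 & D_1 C_2 \end{bmatrix}$ and $D=D_1D_2$. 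Its spectrum is $\Lambda(A_1)\cup\Lambda(A_2)$, so its unstable eigenvalues, counted with multiplicity, are exactly $\Lambda_u(A_1)=\mathcal{P}_u(\mathbf{G}_1)$ (the last equality by minimality of $\mathbf{G}_1$). The lemma then follows once I show that this cascade realization is \emph{both} controllable and observable at every $\lambda_0\in\mathbb{C}\backslash\mathbb{S}$, since in that case no unstable eigenvalue of $A$ is cancelled and the unstable pole multiset of $\mathbf{G}_1\mathbf{G}_2$ coincides with $\Lambda_u(A_1)$.

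For observability at an unstable $\lambda_0$ I would run the PBH test on $\left[\begin{smallmatrix} A-\lambda_0 I \\ C \end{smallmatrix}\right]$. A kernel vector $\col(v_1,v_2)$ satisfies $(A_2-\lambda_0 I)v_2=0$; because $\lambda_0\notin\Lambda(A_2)$ (as $\lambda_0$ is unstable and $A_2$ stable) this forces $v_2=0$, after which the remaining relations read $(A_1-\lambda_0 I)v_1=0$ and $C_1 v_1=0$. Observability of the minimal pair $(C_1,A_1)$ then yields $v_1=0$. Thus observability at each unstable $\lambda_0$ uses only the minimality of $\mathbf{G}_1$.

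Controllability is the crux, and this is where the hypotheses on $\mathbf{G}_2$ enter. Testing $\begin{bmatrix} A-\lambda_0 I & B \end{bmatrix}$ for full row rank, a left null (row) vector $\begin{bmatrix} w_1^\top & w_2^\top \end{bmatrix}$ produces $w_1^\top(A_1-\lambda_0 I)=0$ together with two further relations, which I would combine by eliminating $w_2$ through $(\lambda_0 I-A_2)^{-1}$ (legitimate since $\lambda_0\notin\Lambda(A_2)$). This elimination collapses precisely to $(w_1^\top B_1)\,\mathbf{G}_2(\lambda_0)=0$, exhibiting $w_1^\top B_1$ as a left null vector of the constant matrix $\mathbf{G}_2(\lambda_0)$. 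Now, since $\mathbf{G}_2$ is stable it has no pole at $\lambda_0\in\mathbb{C}\backslash\mathbb{S}$, and since it has full row normal rank with no transmission zeros in $\mathbb{C}\backslash\mathbb{S}$, the matrix $\mathbf{G}_2(\lambda_0)$ has full row rank and hence trivial left null space; therefore $w_1^\top B_1=0$. Together with $w_1^\top(A_1-\lambda_0 I)=0$, controllability of the minimal pair $(A_1,B_1)$ forces $w_1=0$, and then $w_2^\top(A_2-\lambda_0 I)=0$, $w_2^\top B_2=0$ with $(A_2,B_2)$ controllable give $w_2=0$.

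With both PBH conditions established throughout $\mathbb{C}\backslash\mathbb{S}$, I would conclude that the unstable poles of $\mathbf{G}_1\mathbf{G}_2$, counted with Smith–McMillan multiplicity, are exactly the unstable eigenvalues of $A$, namely $\Lambda_u(A_1)=\mathcal{P}_u(\mathbf{G}_1)$, which is the claim. I expect the only delicate point to be the translation of the controllability obstruction into the pointwise statement that $\mathbf{G}_2(\lambda_0)$ has full row rank: one must argue that stability (no pole at $\lambda_0$) together with full row normal rank and the absence of transmission zeros in $\mathbb{C}\backslash\mathbb{S}$ is exactly what rules out a rank drop of $\mathbf{G}_2(\lambda_0)$ at each such $\lambda_0$. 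Since the boundary of $\mathbb{S}$ belongs to $\mathbb{C}\backslash\mathbb{S}$, the same argument simultaneously covers imaginary-axis (respectively unit-circle) points with no modification.
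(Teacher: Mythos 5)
Your proposal is correct, and its skeleton coincides with the paper's: fix minimal realizations, form the block upper-triangular cascade realization of $\mathbf{G}_1\mathbf{G}_2$, show it is stabilizable and detectable on $\mathbb{C}\backslash\mathbb{S}$, and conclude $\mathcal{P}_u(\mathbf{G}_1\mathbf{G}_2)=\Lambda_u(A_1)=\mathcal{P}_u(\mathbf{G}_1)$. The differences lie in how the two supporting steps are discharged. For stabilizability, the paper factors the controllability pencil so as to exhibit the system matrix $\bigl[\begin{smallmatrix} A_2-\lambda I & B_2 \\ C_2 & D_2 \end{smallmatrix}\bigr]$ and invokes Lemma 3.33 and Theorem 3.34 of \cite{zhou} for its full row rank on $\mathbb{C}\backslash\mathbb{S}$; your elimination of $w_2$ through $(\lambda_0 I-A_2)^{-1}$ is exactly the Schur complement of that pencil argument, reducing instead to the pointwise claim that $\mathbf{G}_2(\lambda_0)$ has full row rank --- which, as you correctly flag as the delicate point, follows from the Smith--McMillan form: stability keeps every denominator nonvanishing on $\mathbb{C}\backslash\mathbb{S}$, so a rank drop there could only come from a transmission zero, which is excluded; your route is thus more elementary and self-contained (and, as a small simplification, once $w_1=0$ the invertibility of $A_2-\lambda_0 I$ alone already forces $w_2=0$, without appealing to controllability of $(A_2,B_2)$). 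Conversely, you compress what the paper isolates as its parts $\mathbf{(II)}$ and $\mathbf{(III)}$: the inference ``PBH holds at every $\lambda_0\in\mathbb{C}\backslash\mathbb{S}$ $\Rightarrow$ the unstable pole multiset of the TFM equals $\Lambda_u(A)$ with multiplicities'' appears in your writeup only as the assertion that no unstable eigenvalue is cancelled, whereas the paper proves it via the staircase/Kalman-type decomposition \eqref{eq:min3}, showing $\Lambda_u(A_{11})=\Lambda_u(A_{33})=\{\emptyset\}$ and identifying $\mathcal{P}_u$ with $\Lambda_u(A_{22})$ of the extracted minimal part. Since $\mathcal{P}_u$ is defined to count multiplicities, a complete writeup should include that decomposition argument (or cite it explicitly); note also that the paper's modularization pays off later, as the same part-$\mathbf{(II)}$ machinery is reused verbatim in the proof of Theorem~\ref{thm:implem}.
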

	
	\begin{proof}
		See the Appendix.
	\end{proof}
	
	We now formulate the result that shows how to implement distributed and realization-based versions of \eqref{impletrue}, which successfully stabilize the closed-loop state dynamics from Fig.~\ref{2BlockAgain}.
	
	\begin{theorem}\label{thm:implem}
		Let $\mathbf{G}\in\mathbb{R}_p(\lambda)^{p\times m}$ be given by a stabilizable and detectable realization of type \eqref{ss0ab}-\eqref{ss0c} and let $\mathbf{K}\in\mathbb{R}_p(\lambda)^{m\times p}$ be an internally stabilizing controller of $\mathbf{G}$, which is described by an NRF pair $(\mathbf{\Phi},\mathbf{\Gamma})$ as in \eqref{ISmain1}-\eqref{ISmain2}. Then, by implementing stabilizable and detectable realizations for each $e_i^\top\begin{bmatrix}
			\mathbf{\Phi}&\mathbf{\Gamma}
		\end{bmatrix}$, with $i\in1:m$, and by computing the commands as in \eqref{impletrue}, the closed-loop state dynamics of the system from Fig.~\ref{2BlockAgain} will be asymptotically stable (see Section 5.3 in \cite{zhou}).\vspace{-1mm}
	\end{theorem}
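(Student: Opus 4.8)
The plan is to reduce the claim to the standard internal-stability criterion for state-space interconnections invoked in the statement (Section~5.3 of \cite{zhou}): if both ${\bf G}$ and the controller are presented through stabilizable and detectable realizations and the controller internally stabilizes ${\bf G}$ at the transfer-function level, then the closed-loop state matrix is asymptotically stable. Since ${\bf K}$ is assumed to internally stabilize ${\bf G}$ and ${\bf G}$ already comes with a stabilizable and detectable realization, everything reduces to producing a \emph{stabilizable and detectable} realization of the controller out of the individual row realizations and the loop-closing in \eqref{imple}.

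First I would assemble the row data. Stacking the stabilizable and detectable realizations of each $e_i^\top\begin{bmatrix}{\bf \Phi}&{\bf \Gamma}\end{bmatrix}$ block-diagonally yields a realization $\big(A_c,\,[\,B_c^{\bf\Phi}\ B_c^{\bf\Gamma}\,],\,C_c,\,[\,D_c^{\bf\Phi}\ D_c^{\bf\Gamma}\,]\big)$ of $\begin{bmatrix}{\bf \Phi}&{\bf \Gamma}\end{bmatrix}$ with $A_c$ and $C_c$ block-diagonal and $B_c^{\bf\Phi},B_c^{\bf\Gamma}$ row-stacked. The normalization making $\lim_{\lambda\to\infty}{\bf Y_Q}(\lambda)=I_m$ renders ${\bf \Phi}$ strictly proper (its value at infinity is $I_m-I_m=0$), so $D_c^{\bf\Phi}=0$ and the law $u={\bf \Phi}u+{\bf \Gamma}z$ closes without an algebraic loop, producing the controller realization $A_K=A_c+B_c^{\bf\Phi}C_c$, $C_K=C_c$, $B_K=B_c^{\bf\Gamma}+B_c^{\bf\Phi}D_c^{\bf\Gamma}$. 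Detectability is immediate: $(C_c,A_c)$ is detectable because it is block-diagonal with detectable blocks, and $A_K=A_c+B_c^{\bf\Phi}C_c$ is an output injection, which leaves detectability unchanged.

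The hard part will be stabilizability of $(A_K,B_K)$, since naive stacking does \emph{not} preserve it and the loop-closing must be exploited; this is exactly where Lemma~\ref{lem:pole_conserv} enters. From the $(1,1)$ block of the Bézout identity \eqref{dcrelQ} we have ${\bf Y_Q}{\bf M}+{\bf X_Q}{\bf N}=I_m$, so the stable TFM $\begin{bmatrix}{\bf Y_Q}&{\bf X_Q}\end{bmatrix}$ admits the stable right inverse $\begin{bmatrix}{\bf M}^\top&{\bf N}^\top\end{bmatrix}^\top$, whence it has full row normal rank and no transmission zeros in $\mathbb{C}\backslash\mathbb{S}$. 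Writing $e_i^\top\begin{bmatrix}I_m-{\bf \Phi}&{\bf \Gamma}\end{bmatrix}=({\bf Y_Q})_{ii}^{-1}\,e_i^\top\begin{bmatrix}{\bf Y_Q}&{\bf X_Q}\end{bmatrix}$ from \eqref{ISmain1}--\eqref{ISmain2} and applying Lemma~\ref{lem:pole_conserv} row by row gives $\mathcal{P}_u\big(e_i^\top\begin{bmatrix}{\bf \Phi}&{\bf \Gamma}\end{bmatrix}\big)=\mathcal{P}_u\big(({\bf Y_Q})_{ii}^{-1}\big)$, i.e. the unstable poles of row $i$ are exactly the unstable zeros of the scalar $({\bf Y_Q})_{ii}$; no unstable cancellation can occur, since a whole row of $\begin{bmatrix}{\bf Y_Q}&{\bf X_Q}\end{bmatrix}$ vanishing would be a transmission zero in $\mathbb{C}\backslash\mathbb{S}$, which is excluded. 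As each row realization is stabilizable and detectable, $\Lambda_u({\bf A}_i)$ equals this same set, so $\Lambda_u(A_c)$ equals the unstable zeros of $\det{\bf Y_Q}^{\text{diag}}=\prod_i ({\bf Y_Q})_{ii}$.

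I would finish with a determinant bookkeeping. Using $D_c^{\bf\Phi}=0$ and Sylvester's identity, $\det(\lambda I-A_K)=\det(\lambda I-A_c)\det\big(I_m-{\bf \Phi}(\lambda)\big)$, while $I_m-{\bf \Phi}=({\bf Y_Q}^{\text{diag}})^{-1}{\bf Y_Q}$ gives $\det(I_m-{\bf \Phi})=\det{\bf Y_Q}/\det{\bf Y_Q}^{\text{diag}}$. The unstable zeros of $\det{\bf Y_Q}^{\text{diag}}$ carried by $\det(\lambda I-A_c)$ cancel exactly those in the denominator, leaving $\Lambda_u(A_K)$ equal to the unstable zeros of $\det{\bf Y_Q}$; and since $\begin{bmatrix}{\bf Y_Q}&{\bf X_Q}\end{bmatrix}$ has no unstable zeros, ${\bf K}={\bf Y_Q}^{-1}{\bf X_Q}$ is left-coprime over $\mathbb{C}\backslash\mathbb{S}$, so these are precisely $\mathcal{P}_u({\bf K})$. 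Hence $\Lambda_u(A_K)=\mathcal{P}_u({\bf K})$, which together with the detectability established above yields stabilizability of the controller realization. With both plant and controller realizations now stabilizable and detectable and ${\bf K}$ internally stabilizing ${\bf G}$, the cited criterion forces the closed-loop state matrix of Fig.~\ref{2BlockAgain} to be asymptotically stable. The delicate point throughout is the exact matching of unstable pole/zero multiplicities in the cancellation, which is exactly what the no-unstable-zero property of $\begin{bmatrix}{\bf Y_Q}&{\bf X_Q}\end{bmatrix}$, channeled through Lemma~\ref{lem:pole_conserv}, secures.
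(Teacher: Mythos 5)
Your proof is correct, and it takes a genuinely different route through the core of the argument. The shared portion is the row-wise pole identification: exactly like part $\mathbf{(I)}$ of the paper's proof, you combine the B\'{e}zout identity \eqref{dcrelQ} (which gives $\begin{bmatrix}\mathbf{Y}_{\mathbf{Q}}&\mathbf{X}_{\mathbf{Q}}\end{bmatrix}$ full row rank on $\mathbb{C}\backslash\mathbb{S}$) with Lemma~\ref{lem:pole_conserv} to pin down $\mathcal{P}_u\big(e_i^\top\begin{bmatrix}\mathbf{\Phi}&\mathbf{\Gamma}\end{bmatrix}\big)$ as the unstable zeros of $(\mathbf{Y}_{\mathbf{Q}})_{ii}$, and your observation that a vanishing row would contradict full row rank is the right reason no unstable cancellation occurs. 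After that you diverge: the paper keeps the stacked realization \eqref{eq:full_real} of $\begin{bmatrix}\mathbf{\Phi}&\mathbf{\Gamma}\end{bmatrix}$ open-loop, proves it stabilizable and detectable by an explicit Kalman-type decomposition (its part $\mathbf{(II)}$), rewrites Fig.~\ref{2BlockAgain} as a unity feedback of $\begin{bmatrix}\mathbf{\Phi}&\mathbf{\Gamma}\end{bmatrix}$ against $\begin{bmatrix}-I_m&\mathbf{G}^\top\end{bmatrix}^\top$ as in Fig.~\ref{fig:eq_con}, and verifies TFM-level internal stability of that configuration by computing $\widetilde{\mathbb{H}}$ from the DCF (part $\mathbf{(III)}$) before applying the state-space lemmas of \cite{zhou} (part $\mathbf{(IV)}$). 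You instead absorb the $\mathbf{\Phi}$-loop into the controller at realization level (legitimate, since the normalization $\mathbf{Y}_{\mathbf{Q}}(\infty)=I_m$ makes $\mathbf{\Phi}$ strictly proper, so the elimination is static and algebraic-loop-free), obtain detectability by output injection, and obtain stabilizability from the multiset count $\Lambda_u(A_K)=\mathcal{P}_u(\mathbf{K})$ via $\det(\lambda I-A_K)=\det(\lambda I-A_c)\det\big(I_m-\mathbf{\Phi}(\lambda)\big)$ together with the coprime-factorization fact that the unstable zeros of $\det\mathbf{Y}_{\mathbf{Q}}$ are exactly $\mathcal{P}_u(\mathbf{K})$ with multiplicity. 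This buys a shorter path: you never need the paper's part $\mathbf{(III)}$ computation, because TFM-level internal stability of the standard two-block loop is already the theorem's hypothesis; your determinant identity also handles the potential unstable diagonal cancellations ($\Lambda_u(A_c)$ versus the denominator $\det\mathbf{Y}_{\mathbf{Q}}^{\text{diag}}$) very cleanly. What the paper's route buys in exchange is the explicit equivalent interconnection and the stable closed-loop expressions for the $\delta_u$ channel, which it reuses for \eqref{finallyhappy} and the subsequent design discussion. Two points you should make explicit to close the write-up: \emph{(i)} that closing the strictly proper $\mathbf{\Phi}$-loop does not alter the closed-loop state matrix of Fig.~\ref{2BlockAgain} --- the states $x_{\mathbf{G}}, x_{\mathbf{K}}$ and the zero-input equations are identical, and the $\delta_u$ injection only affects input matrices, not the state dynamics the theorem speaks about; and \emph{(ii)} a reference (\emph{e.g.}, \cite{V}) for the multiplicity-faithful identity between the unstable zeros of $\det\mathbf{Y}_{\mathbf{Q}}$ and $\mathcal{P}_u(\mathbf{K})$ for a left coprime factorization, since your final cancellation bookkeeping hinges on it being exact as a multiset equality.
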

	
	\begin{proof}
		See the Appendix.
	\end{proof}

	\begin{rem}\label{rem:implem}
		By obtaining minimal state-space realizations denoted $e_i^\top\begin{bmatrix}
			\mathbf{\Phi} & \mathbf{\Gamma}
		\end{bmatrix}=\left[\footnotesize\begin{array}{c|c}
			A_i-\lambda I_{n_i}& B_i\\\hline C_i&D_i
		\end{array}\right],\ \forall i\in1:m$, notice that each zero column of $e_i^\top\begin{bmatrix}
			\mathbf{\Phi} & \mathbf{\Gamma}
		\end{bmatrix}$ will result in a zero column on the same position in both $B_i$ and $D_i$, producing structured and stabilizing state-space-based implementations.
		
	\end{rem}
	
	\begin{rem}\label{rem:scalab}
		Quite notably, Remark \ref{rem:implem} and Theorem \ref{thm:implem} still hold when implementing minimal and, respectively, stabilizable and detectable realizations for block-rows of $\begin{bmatrix}
			\mathbf{\Phi}&\mathbf{\Gamma}
		\end{bmatrix}$, instead of just single rows. Moreover, minimal realizations of such block-rows may also reduce the number of sub-controller states at any location where more than a single command signal is computed, emphasizing the scalability of our method.
	\end{rem}
	
	\noindent We conclude this subsection with the next consequence of Theorem \ref{thm:implem}, which certifies closed-loop stability with respect to bounded additive disturbance in dynamics of type \eqref{ss0ab}-\eqref{ss0c}.
	
	\begin{coro}\label{cor:delta}
		Let the same hypotheses and notation hold as in the statement of Theorem \ref{thm:implem}. Let the state dynamics \eqref{ss0ab} of $\mathbf{G}$ and $\begin{bmatrix}
			\mathbf{\Phi}&\mathbf{\Gamma}
		\end{bmatrix}$ be affected additively by the bounded disturbances $\delta_{x_{\mathbf{G}}}$ and $\delta_{x_{\mathbf{K}}}$, respectively, and the output dynamics \eqref{ss0c} of $\mathbf{G}$ and $\begin{bmatrix}
		\mathbf{\Phi}&\mathbf{\Gamma}
	\end{bmatrix}$ be affected additively by the bounded disturbances $\delta_{y_{\mathbf{G}}}$ and $\delta_{y_{\mathbf{K}}}$, respectively. Then, all the closed-loop signals and state variables in Fig. \ref{2BlockAgain} remain bounded.
	\end{coro}
	\begin{proof}
		See the Appendix.
	\end{proof}
	
	\begin{rem}
		Similarly to the SLS framework from \cite{matni1,matni2}, we consider the bounded disturbances $\delta_{x_{\mathbf{K}}}$ and $\delta_{y_{\mathbf{K}}}$ as arising from computational errors and the imperfect implementation of the controller's NRF pair. Moreover, the bounded disturbances $\delta_{x_{\mathbf{G}}}$ and $\delta_{y_{\mathbf{G}}}$ can be attributed to unmapped network dynamics and sources of disturbance which are not captured in \eqref{ss0ab}-\eqref{ss0c}.
	\end{rem}
	
	\subsection{Norm-based Optimal Design}
	
	Given the sensing and communication subspace constraints  $\mathcal{X}$ and $\mathcal{Y}$, we desire to obtain NRF-based implementations of distributed controllers ${\bf K}$ which solve the following problem
	\begin{subequations}
		\begin{alignat}{3}
			&\!\min_{\Huge{{\bf K}}\in\mathbb{R}(\lambda)^{m\times p}}        &\qquad& \norm{\: \mathbf{H}_{CL}({\bf G},{\bf K}) \:}\label{eq:optProbA}\\
			&\text{subject to} &      & \text{Fig. \ref{2BlockAgain} is internally stable} ,\label{eq:constraintA1}\\
			&                  &      & {\bf K} = (I_m-{\bf \Phi})^{-1} {\bf \Gamma}, \label{eq:constraintA2} \\
			&                  &      & {\bf \Phi}\in \mathcal{Y}, \; {\bf \Gamma}\in \mathcal{X}. \label{eq:constraintA3}
		\end{alignat}
	\end{subequations}
	
	

	With the difficulty of (\ref{eq:optProbA})--(\ref{eq:constraintA3}) being well-understood in literature, and its epitome being the computation of an optimal controller having a (block-)diagonal TFM, we now focus on a tractable adaptation of it. The latter is given via the following result, 
	which has the benefit of being stated in terms of affine expressions, starting from a {\em fixed} DCF \eqref{dcrel} of the plant.

	\begin{coro} \label{July20primo}
		Let ${\bf G}\in {\mathbb{R}_p(\la)}^{p\times m}$. Consider $(\mathbf{\Phi},\mathbf{\Gamma})$ from \eqref{ISmain1}-\eqref{ISmain2}, based upon a DCF \eqref{dcrel} of $\mathbf{G}$. Then, (\ref{eq:optProbA})--(\ref{eq:constraintA3}) is equivalent to the following affine model matching problem
		\begin{subequations} \label{MReq}
			\begin{alignat}{4}
				&\!\min_{\Huge{{\bf Q}\in\mathbb{R}_{\mathbb{S}}(\lambda)^{m\times p}}}        &\qquad& \norm{ \: \mathbf{H}_{CL}({\bf G},{\bf K_Q}) \:}\label{eq:optProbC}\\
				&\emph{subject to} &      & {\bf Y_Q}\in \mathcal{Y^+}, \; {\bf X_Q}\in \mathcal{X}.   \label{eq:constraintC2} 
			\end{alignat}
		\end{subequations}
	\end{coro}
	\begin{proof}
		The result follows directly from Theorem \ref{IS}.
	\end{proof}
	
	\begin{rem}
		Efficient numerical solutions for type \eqref{eq:optProbC}-\eqref{eq:constraintC2} problems were proposed in \cite{aug_sparse}. However, an important limitation of Corollary~\ref{July20primo} is that its outcome depends on the initial choice of a DCF over $\mathbb{R}_{\mathbb{S}}(\lambda)$ for the plant. This was to be expected and has been alleviated in part by the subsequent results from \cite{aug_sparse}. Similarly, the outcome of the SLS \cite{matni2} depends on the initial choice of a realization of the plant. Moreover, we point out that the suboptimality gap induced by the desired sparsity structure of $\mathbf{\Phi}$ and $\mathbf{\Gamma}$ must also be taken into account in the design phase, as is the case with the SLS (see, for example, \cite{reg} and the numerical examples from \cite{matni1}).
	\end{rem}
	

	\vspace{-4mm}
	
	\section{Alternative Representations}\label{sec:alt}

	\begin{figure}
		\centering
		\begin{tikzpicture}[scale=.9, every node/.style={transform shape}]
			\node(n1)[minimum height=0.5cm, minimum width=0.5cm,draw]at(0,0) {Node $1$};
			\node(n2)[minimum height=0.5cm, minimum width=0.5cm,draw]at(-3.5,-0.8) {Node $2$};
			\node(n3)[minimum height=0.5cm, minimum width=0.5cm,draw]at(-3.5,0.8) {Node $3$};
			\node(n4)[minimum height=0.5cm, minimum width=0.5cm,draw]at(3.5,-0.8) {Node $4$};
			\node(n5)[minimum height=0.5cm, minimum width=0.5cm,draw]at(3.5,0.8) {Node $5$};
			
			\node(n11)[above of=n1, node distance=3em]{$u_1$};
			\draw[-latex](n11)--(n1);
			\node(n12)[below of=n1, node distance=3em]{$y_1$};
			\draw[-latex](n1)--(n12);
			
			\node(n31) at (-3.8, 1.75){$u_3$};
			\draw[-latex](n31)--(-3.8, 1.05);
			\node(n32) at (-3.2, 1.75){$y_3$};
			\draw[-latex](-3.2, 1.05)--(n32);
			
			\node(n51) at (3.2, 1.75){$u_5$};
			\draw[-latex](n51)--(3.2, 1.05);
			\node(n52) at (3.8, 1.75){$y_5$};
			\draw[-latex](3.8, 1.05)--(n52);
			
			\node(n21) at (-3.8, -1.75){$u_2$};
			\draw[-latex](n21)--(-3.8, -1.05);
			\node(n22) at (-3.2, -1.75){$y_2$};
			\draw[-latex](-3.2, -1.05)--(n22);
			
			\node(n41) at (3.2, -1.75){$u_4$};
			\draw[-latex](n41)--(3.2, -1.05);
			\node(n42) at (3.8, -1.75){$y_4$};
			\draw[-latex](3.8, -1.05)--(n42);
			
			\draw[-latex](n1)--(n2) node[near end, below]{$y_{1}$};
			\draw[-latex](n1)--(n3) node[near end, above]{$y_{1}$};
			\draw[-latex](n1)--(n4) node[near end, below]{$y_{1}$};
			\draw[-latex](n1)--(n5) node[near end, above]{$y_{1}$};
			\draw[-latex](n2)--(n3) node[midway, left]{$y_{2}$};
			
			\draw[dashed](-1.5,2.75)--(-1.5,-2.25);
			\draw[dashed](1.5,2.75)--(1.5,-2.25);
			\node at (-3.5,2.5){Area 2};
			\node at (0,2.5){Area 1};
			\node at (3.5,2.5){Area 3};
		\end{tikzpicture}
		\caption{Interconnection of the network's various nodes and the areas of admissible communication\vspace{-4mm}}\label{fig:area}
	\end{figure}
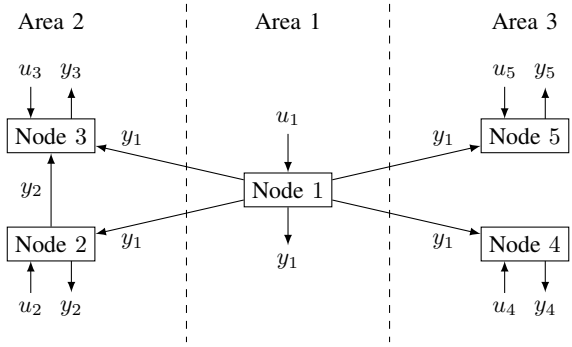
	
	\begin{figure*}
		\begin{equation} \label{eq:u_NRF}
			\small\begin{bmatrix}
				u_1\\u_2\\u_3\\u_4\\u_5
			\end{bmatrix}\hspace{-1mm}=\hspace{-1mm}\tiny\begin{bmatrix}
				0&0&0&0&0\\
				\frac{-0.2}{z-0.8}&0&0&0&0\\
				\frac{-0.2 z + 0.12}{z^2 - 1.6 z + 0.64}&\frac{-0.2}{z-0.8}&0&0&0\\
				\frac{-0.2}{z-0.8}&0&0&0&0\\
				\frac{-0.2}{z-0.8}&0&0&0&0
			\end{bmatrix}\small\hspace{-2mm}\begin{bmatrix}
				u_1\\u_2\\u_3\\u_4\\u_5
			\end{bmatrix}\hspace{-1mm}+\hspace{-1mm}\tiny\begin{bmatrix}
				\frac{1.05z-0.85}{z^2-0.2z-0.8}&0&0&0&0\\
				0&\frac{1.05z-0.85}{z^2-0.2z-0.8}&0&0&0\\
				0&0&\frac{1.05z-0.85}{z^2-0.2z-0.8}&0&0\\
				0&0&0&\frac{1.05z-0.85}{z^2-0.2z-0.8}&0\\
				0&0&0&0&\frac{1.05z-0.85}{z^2-0.2z-0.8}
			\end{bmatrix}\small\hspace{-2mm}\begin{bmatrix}
				z_1\\z_2\\z_3\\z_4\\z_5
			\end{bmatrix}\normalsize\tag{17}
		\end{equation}\hrulefill\vspace{-5mm}
	\end{figure*}

	In this section, we examine the opportunity of implementing NRF-based control schemes as in the SLS framework \cite{matni1,matni2}, by employing the closed-loop maps achievable with stabilizing controllers. We denote by ${\bf T}^{ \ell \epsilon}_{\bf Q}$ the dependency on the Youla parameter ${\bf Q}$ of the closed-loop map between two signals $\epsilon$ to $\ell$, and let $\beta$ denote the states of the controller. We investigate a distributed implementation for controllers based on the closed-loop maps from $[\zeta^\top \; \; r^\top]^\top$ to $[y^\top \; \; u^\top]^\top$ in Fig.~\ref{2Block}, namely
		\begin{align*}
			{\bf T}_{\bf Q}^{y\zeta} \; \beta &= - \; {\bf T}_{\bf Q}^{yr} \; z,  
			\\
			u&=  \phantom{-}\ {\bf T}_{\bf Q}^{u\zeta} \; \beta \; + {\bf T}_{\bf Q}^{ur} \;z, 
		\end{align*}
	or, in terms of the corresponding DCF over $\mathbb{R}_{\mathbb{S}}(\lambda)$, we get that\vspace{-5mm}
	\begin{subequations}
		\begin{align}
			\tilde {\bf Y}_{\bf Q}  \tilde {\bf M}\; \beta  &= \; (\tilde {\bf Y}_{\bf Q}  \tilde {\bf M}-I_p)\; z,  \label{miezu6}\\
			u&= -\tilde {\bf X}_{\bf Q}  \tilde {\bf M}\; \beta \; + \tilde {\bf X}_{\bf Q}  \tilde {\bf M}\;z. \label{miezu5}
		\end{align}
	\end{subequations}

	It can be checked that the elimination of $\beta$ from (\ref{miezu6})-(\ref{miezu5}) (since $\tilde {\bf Y}_{\bf Q}  \tilde {\bf M}$ is invertible, as per Remark \ref{rem:scale}) yields  the ${\bf K_Q}$ controller via its right coprime factorization over $\mathbb{R}_{\mathbb{S}}(\lambda)$ $u= { \bf {\tilde X}_Q} {\tilde {\bf Y}_{\bf Q}}^{-1} \: z$, as in (\ref{YoulaEq}). For implementation purposes, we require an NRF-based formulation of the ``state iteration'' from (\ref{miezu6}). By applying a transformation of type (\ref{NRFmain}), we get
	\begin{equation*}
		\beta= (I_p- {\bf \Omega}^{-1} \tilde {\bf Y}_{\bf Q}  \tilde {\bf M}\; )(\beta+\delta_\beta)  + \; {\bf \Omega}^{-1}(\tilde {\bf Y}_{\bf Q}  \tilde {\bf M}-I_p) z, 
	\end{equation*}		
	where ${\bf \Omega}:= (\tilde {\bf Y}_{\bf Q}  \tilde {\bf M})^{\text{diag}}$, which has a proper inverse (recall Remark \ref{rem:scale}), while $\delta_\beta$ represents additive disturbances acting upon the controller's states, replacing $\delta_u$ from Fig.~\ref{2BlockAgain}. Thus, we obtain a set of control laws which are akin to the \emph{secondary} specialized implementations from Section III-C of \cite{matni2}, namely\vspace{-5mm}
	\begin{subequations}
		\begin{align} \label{redSLS1}
			&\hspace{-2mm}\beta= (I_p- {\bf \Omega}^{-1} \tilde {\bf Y}_{\bf Q}  \tilde {\bf M} )(\beta+\delta_\beta)  + {\bf \Omega}^{-1}(\tilde {\bf Y}_{\bf Q}  \tilde {\bf M}-I_p) z, \\
			\label{redSLS2}
			&\hspace{-2mm}u=  -\tilde {\bf X}_{\bf Q}  \tilde {\bf M}\; \beta \; + \tilde {\bf X}_{\bf Q}  \tilde {\bf M}z.
		\end{align}
	\end{subequations}

	Yet, as expected (see Section III-C of \cite{matni2}), this type of specialized implementation for a centralized stabilizing controller is hampered by restrictive assumptions on the plant's TFM.\vspace{-1mm}
	
	\begin{theorem} \label{MR3}
		
		Let ${\bf G}\in {\mathbb{R}_p(\la)}^{p\times m}$ and consider any of its $\mathbf{Q}$-parameterized DCFs, as in \eqref{EqYoula4}-\eqref{dcrelQ}. Define ${\bf \Omega}^{}:=(\tilde {\bf Y}_{\bf Q}  \tilde {\bf M})^{\text{diag}}$, where ${\bf Q}$ is the (stable) Youla parameter, and let also ${\bf K}_{\mathbf{Q}}\in {\mathbb{R}_p(\la)}^{m\times p}$ be a (centralized) stabilizing controller, as in \eqref{YoulaEq}.
		If ${\bf G}$ is unstable, then the implementation \eqref{redSLS1}-\eqref{redSLS2} of ${\bf K_Q}$ does not internally stabilize the feedback loop with $\mathbf{G}$.\vspace{-1mm}
	\end{theorem}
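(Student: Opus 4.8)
The plan is to exhibit a single closed-loop map of Fig.~\ref{2BlockAgain} that is unstable, thereby refuting internal stability, even though the controller's input--output transfer collapses to the stabilizing ${\bf K_Q}$. The offending map will be the one from the plant-input disturbance $w$ to the \emph{controller state} $\beta$. Put ${\bf \Psi}:=\tilde{\bf Y}_{\bf Q}\tilde{\bf M}$ (invertible by hypothesis) and set $\delta_\beta=0$. Eliminating $\beta$ from \eqref{redSLS1}-\eqref{redSLS2} exactly as in \eqref{miezu6}-\eqref{miezu5} returns $u={\bf K_Q}z$ for the command, while the state equation \eqref{redSLS1} itself reduces to $\beta=(I_p-{\bf \Psi}^{-1})z$. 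Since $\beta$ is a bona fide internal signal of the loop, internal stability demands that its response to \emph{every} exogenous input be stable; I will show that this fails for $w$.

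First I would close the loop keeping only $w$ nonzero, so that $v=u+w$, $y={\bf G}v$ and $z=-{\bf G}(u+w)$. Substituting $u={\bf K_Q}z$ and reading the elementary closed-loop identities off \eqref{finallly}-\eqref{klap1}---namely $(I_m+{\bf K_Q}{\bf G})^{-1}{\bf K_Q}={\bf M}{\bf X_Q}$ and ${\bf G}{\bf M}={\bf N}$---gives $u=-{\bf M}{\bf X_Q}{\bf G}\,w$ and therefore $z=({\bf N}{\bf X_Q}-I_p){\bf G}\,w=-{\bf \Psi}{\bf G}\,w$, where the last step uses $I_p-{\bf N}{\bf X_Q}={\bf \Psi}$. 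This identity holds because the two matrices in \eqref{dcrelQ} are mutual inverses, so their product in reverse order is again $I_{m+p}$ and its $(2,2)$ block yields ${\bf N}{\bf X_Q}+\tilde{\bf Y}_{\bf Q}\tilde{\bf M}=I_p$. Feeding $z=-{\bf \Psi}{\bf G}\,w$ into $\beta=(I_p-{\bf \Psi}^{-1})z$ produces the cancellation
\begin{equation*}
\beta=(I_p-{\bf \Psi}^{-1})(-{\bf \Psi}){\bf G}\,w=(I_p-{\bf \Psi}){\bf G}\,w={\bf N}{\bf X_Q}{\bf G}\,w,
\end{equation*}
and, since $\tilde{\bf M}{\bf G}=\tilde{\bf N}$ gives ${\bf \Psi}{\bf G}=\tilde{\bf Y}_{\bf Q}\tilde{\bf N}$, the closed-loop map from $w$ to $\beta$ becomes ${\bf T}^{\beta w}={\bf N}{\bf X_Q}{\bf G}={\bf G}-\tilde{\bf Y}_{\bf Q}\tilde{\bf N}$.

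The conclusion is then immediate. The term $\tilde{\bf Y}_{\bf Q}\tilde{\bf N}$ is a product of stable DCF factors, hence stable; were ${\bf T}^{\beta w}={\bf G}-\tilde{\bf Y}_{\bf Q}\tilde{\bf N}$ stable as well, then ${\bf G}=({\bf G}-\tilde{\bf Y}_{\bf Q}\tilde{\bf N})+\tilde{\bf Y}_{\bf Q}\tilde{\bf N}$ would be stable too, contradicting the hypothesis that ${\bf G}$ is unstable. Thus $\mathcal{P}_u({\bf T}^{\beta w})=\mathcal{P}_u({\bf G})\neq\{\emptyset\}$---no unstable pole of ${\bf G}$ can be annihilated by the stable additive correction, which is the same non-cancellation principle that underlies \lemref{lem:pole_conserv}---so a bounded $w$ drives the controller state $\beta$ unbounded and the state dynamics of Fig.~\ref{2BlockAgain} are not asymptotically stable. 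Hence \eqref{redSLS1}-\eqref{redSLS2} does not internally stabilize the loop. I expect the main obstacle to be the bookkeeping in the middle step: one must juggle the several cross-identities among $({\bf M},{\bf N},\tilde{\bf M},\tilde{\bf N},{\bf X_Q},{\bf Y_Q},\tilde{\bf X}_{\bf Q},\tilde{\bf Y}_{\bf Q})$ so that the unstable factor ${\bf G}$ \emph{survives} rather than cancels, and one must recognize that the instability lives in the controller's internal state $\beta$---every physical map $w,r,\nu,\delta_\beta\mapsto z,u,v,y$ being perfectly stable.
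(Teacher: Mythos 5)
Your proof is correct and takes essentially the same approach as the paper: both single out the closed-loop map from the input disturbance $w$ to the controller state $\beta$, use the B\'{e}zout identities to show it equals $\mathbf{G}$ minus a stable TFM (your $\mathbf{G}-\tilde{\mathbf{Y}}_{\mathbf{Q}}\tilde{\mathbf{N}}$ coincides with the paper's $\mathbf{G}-\mathbf{N}\mathbf{Y}_{\mathbf{Q}}$, since $\mathbf{N}\mathbf{Y}_{\mathbf{Q}}=\tilde{\mathbf{Y}}_{\mathbf{Q}}\tilde{\mathbf{N}}$ by the reverse-order identity from \eqref{dcrelQ}), and conclude that the unstable poles of $\mathbf{G}$ survive. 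The only difference is mechanical: the paper inverts the full block system \eqref{cutata}, whereas you set $\delta_\beta=0$ and eliminate $\beta$ via $\beta=(I_p-{\bf \Psi}^{-1})z$, which is legitimate because exhibiting one unstable closed-loop map already refutes internal stability.
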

	
	\begin{proof} See the Appendix.\vspace{-1mm}
	\end{proof}

	\section{Numerical Example}\label{sec:ex}
	
	Consider a grid of $5$ interconnected nodes separated into $3$ local areas, as shown in Fig.~\ref{fig:area}. We aim to obtain a distributed control law in which each node's controller employs only local measurements and exchanges command values only with other sub-controllers that belong to nodes located in the original node's area or in directly adjacent ones.
	
	Thus, we will devise a control law in which the controller of node $1$ sends its command to nodes $2-5$ while the controller of node $2$ sends its command to node $3$.	The network from Fig.~\ref{fig:area} is modeled as a discrete-time system with a sampling time of $T_s=100$ ms. To describe the network's TFM, denoted $\mathbf{G}(z)$, define $\mathbf{\Gamma}_{\mathbf{G}}(z):=\frac{1}{z-1}$ and $\mathbf{\Phi}_{\mathbf{G}}(z):=\frac{0.2}{z-0.8}$ to get that
	\begin{equation*}\label{eq:G_def}
		\begin{array}{c}
			\begin{array}{cc}
				\mathcal{B}:=\footnotesize\begin{bmatrix}
					0&0&0&0&0\\
					1&0&0&0&0\\
					1&1&0&0&0\\
					1&0&0&0&0\\
					1&0&0&0&0\\
				\end{bmatrix},
				& \begin{array}{l}
					\mathbf{U}(z):=I_5-\mathbf{\Phi}_{\mathbf{G}}\mathcal{B},\vspace{2mm}\\
					\mathbf{V}(z):=\mathbf{\Gamma}_{\mathbf{G}}I_5,\vspace{2mm}\\
					\mathbf{G}(z)\phantom{:}=\mathbf{U}^{-1}\mathbf{V},
				\end{array} 
			\end{array}\\
			\mathbf{U}(z)^{-1}=\footnotesize\begin{bmatrix}
				1&0&0&0&0\\
				\mathbf{\Phi}_{\mathbf{G}}&1&0&0&0\\
				\mathbf{\Phi}_{\mathbf{G}}^2+\mathbf{\Phi}_{\mathbf{G}}&\mathbf{\Phi}_{\mathbf{G}}&1&0&0\\
				\mathbf{\Phi}_{\mathbf{G}}&0&0&1&0\\
				\mathbf{\Phi}_{\mathbf{G}}&0&0&0&1\\
			\end{bmatrix}.
		\end{array}
	\end{equation*}
	Note, moreover, that a DCF over $\mathbb{R}_{\mathbb{S}}(\lambda)$ of $\mathbf{G}$ is given by
	\begin{equation*}\label{eq:DCF_V}
		\begin{array}{rcl}
			\begin{bmatrix}
				\phantom{-}\widetilde{\mathbf{M}}&\widetilde{\mathbf{N}}\\
				-{\mathbf{X}}&{\mathbf{Y}}
			\end{bmatrix}&=&\footnotesize\begin{bmatrix}\frac{z-1}{z-0.5}I_5 &\frac{1}{z-0.5}\mathbf{U}^{-1}\vspace{2mm}\\ \frac{-0.25}{z-0.5}I_5&\frac{z}{z-0.5}\mathbf{U}^{-1} \end{bmatrix},\vspace{2mm}\\\begin{bmatrix}
				\widetilde{\mathbf{Y}}&-{\mathbf{N}}\\
				\widetilde{\mathbf{X}}&\phantom{-}{\mathbf{M}}
			\end{bmatrix}&=&\footnotesize\begin{bmatrix} \frac{z}{z-0.5}I_5&\frac{-1}{z-0.5}I_5\vspace{2mm}\\\frac{0.25}{z-0.5}\mathbf{U}&\frac{z-1}{z-0.5}\mathbf{U} \end{bmatrix}.
		\end{array}
	\end{equation*}
	

	\begin{figure}[!t]
		\centering
		\includegraphics[width=.8\columnwidth]{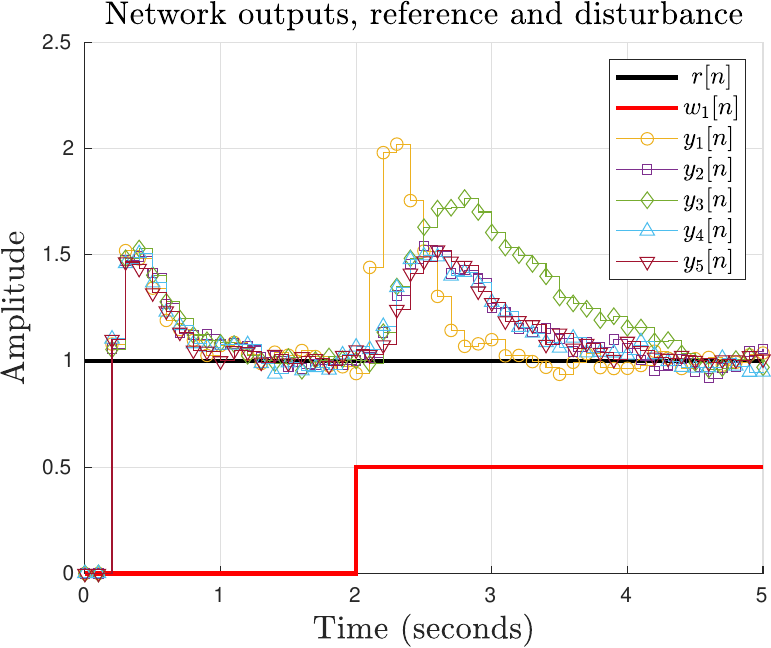}\vspace{-1mm}
		\caption{Reference tracking of the closed-loop network with NRF implementation and input, measurement and communication disturbance\vspace{-6mm}}
		\label{fig:step_resp}
	\end{figure}
	
	\begin{figure*}
		\begin{equation} \label{finallyhappy}
			\ba{c} z \\  u \\ v \\ y\ea = \footnotesize\ba{cccr}  \;- \tilde {\bf Y}_{\bf Q} \tilde {\bf N} \; & \;\tilde {\bf Y}_{\bf Q} \tilde {\bf M} \; &\; -\tilde {\bf Y}_{\bf Q} \tilde {\bf M} \; & \;{\bf N} (\bf Y_Q-\bf Y_Q^{\text{diag}}) \\ \; - \tilde {\bf X}_{\bf Q}  \tilde {\bf N}\; &\; \tilde {\bf X}_{\bf Q}  \tilde {\bf M}\; &\; -\tilde {\bf X}_{\bf Q}  \tilde {\bf M}\; & \; - {\bf M}(\bf Y_Q-\bf Y_Q^{\text{diag}}) \\  \; I_m - \tilde {\bf X}_{\bf Q}  \tilde {\bf N}\; &\; \tilde {\bf X}_{\bf Q}  \tilde {\bf M}\; &\; - \tilde {\bf X}_{\bf Q}  \tilde {\bf M}\; & \; - {\bf M}(\bf Y_Q-\bf Y_Q^{\text{diag}}) \\ \; \tilde {\bf Y}_{\bf Q} \tilde {\bf N} \; & \; I_p-\tilde {\bf Y}_{\bf Q} \tilde {\bf M} \; &\; \phantom{-}\tilde {\bf Y}_{\bf Q} \tilde {\bf M} \; & \;-{\bf N} (\bf Y_Q-\bf Y_Q^{\text{diag}}) \ea   \normalsize\ba{c} w \\ r \\ \zeta \\ \delta_u \ea \tag{21}
		\end{equation}
		\hrulefill
	\end{figure*}
	
	
	We choose $\mathbf{Q}(z)=\frac{0.8}{z-0.2}I_5$ to get, for the network from Fig.~\ref{fig:area}, the control laws from \eqref{eq:u_NRF}, at the top of the next page. 
	
	Consider, now, the following simulation scenario. Let the controller (in NRF form) be implemented in standard unity feedback with our network and let the reference signal be given by $r[n]:=\mathbf{1}[n]\begin{bmatrix}
		1&1&1&1&1
	\end{bmatrix}^\top$, where $\mathbf{1}[n]$ denotes the discrete-time Heaviside step function.

Also, let each output measurement be disturbed additively by a measurement noise $\zeta_i[n]$, $i\in1:5$, and let each communicated command be affected additively by a communication disturbance $\delta_{ui}[n]$, $i\in1:5$, with these signals being modeled as uniformly distributed noise having $\left|\zeta_i[n]\right|,\left|\delta_{ui}[n]\right|\leq0.05,\ \forall n\in\mathbb{N}$. Moreover, let $w_1[n]:=0.5\times\mathbf{1}[n-20]$ be an additive disturbance at the input of the network's first node. 
	
	As can be seen in Fig.~\ref{fig:step_resp} on the next page, not only are all signals bounded, even in the presence of communication disturbance, but the distributed controller also ensures satisfactory performance for reference tracking and disturbance rejection.\vspace{-1mm}
	
	\section{Conclusion}\label{sec:fin}
	
	We have successfully formalized the distributed design framework introduced in \cite{TAC2016}, while highlighting its inherent connections \cite{Luca3} with existing techniques from literature \cite{matni1,matni2}. Moreover, we have extended the coprime factor approach from \cite{SI} via state-space-based implementations, while laying the theoretic groundwork for reliable numerical procedures \cite{aug_sparse}, enabling the synthesis of scalable and robust control laws.\vspace{-2mm}
	
	\section*{Acknowledgment}
	
	The authors would like to thank our colleague, Dr. Bogdan D. Ciubotaru, for the insightful comments and advice made during the elaboration of this manuscript.\vspace{-2mm}
	

	\section*{Appendix}\vspace{-1mm}
	
	\noindent {\bf Proof of Theorem~\ref{IS}} In order to prove point $\mathbf{(a)}$ from the result's statement, we will express all closed-loop maps from $\begin{bmatrix}
		r^\top & w ^\top & \zeta^\top & \delta_u^\top \end{bmatrix}^\top$ to $\begin{bmatrix}
		y^\top & u^\top & z^\top & v^\top \end{bmatrix}^\top$ in terms of the $\mathbf{Q}$-parameterized DCF given in \eqref{EqYoula4}-\eqref{dcrelQ}. 
	The equations of the standard unity feedback interconnection from Fig.~\ref{2Block} are given by $z  =r - y $, $v  =u+w$, $y  = {\bf G} v + \zeta$ and $u  = {\bf K}_{\bf Q}\: z$, respectively,
	or equivalently by $y  =r -z $, $v   =u + w $, and\stepcounter{equation}\vspace{-1mm}
	\begin{subequations}
		\begin{align}
			z + {\bf G}\: u  &= -{\bf G} w + r - \zeta \label{pl1} \: ,\\
			- {\bf K}_{\bf Q} z +u &=O  \label{pl2}.\vspace{-1mm}
		\end{align}
	\end{subequations}
	Multiplying to the left in \eqref{pl1} with $\tilde {\bf M}$  and in \eqref{pl2} with ${\bf Y}_{\bf Q}$, we obtain, via the identities from \eqref{EqYoula4}-\eqref{dcrelQ}, that 
	\begin{subequations}\vspace{-1mm}
		\begin{align} 
			\tilde {\bf M} z + \tilde {\bf N} u   &= - \tilde  {\bf N} w + \tilde  {\bf M}   r  - \tilde  {\bf M}   \zeta \: , \label{temp0} \\ 
			- {\bf X}_{\bf Q} z  +{\bf Y}_{\bf Q} u &=O . \label{temp1}\vspace{-1mm}
		\end{align}
	\end{subequations}
	By  implementing $u$ via \eqref{impletrue} and \eqref{ISmain1}-\eqref{ISmain2} and allowing it to be affected by $\delta_u$, 
	the equations from (\ref{temp0})-(\ref{temp1}) turn into
	\begin{align}\nonumber 
		&\footnotesize\ba{cc}   \tilde {\bf M} &  \tilde {\bf N} \\ - \big( \bf Y_Q^{\text{diag}}\big)^{-1} {\bf X_Q} \; &  \; {\big( \bf Y_Q^{\text{diag}}\big)}^{-1} {\bf Y}_{\bf Q} \ea \ba{c} z \\  u \ea = \\ &\footnotesize\ba{cccc} -\tilde {\bf N} &  \tilde {\bf M} \; & -\tilde {\bf M} \; & O \\  O\; &  O & O &  \; I_m - {\big( \bf Y_Q^{\text{diag}}\big)}^{-1} {\bf Y}_{\bf Q} \ea \tiny\ba{c} w \\ r \\ \zeta \\ \delta_u \ea.\normalsize\label{temp33}
	\end{align}
	
	To obtain the explicit dependency of $\begin{bmatrix}
		z^\top & u^\top \end{bmatrix}^\top$ in terms of $\begin{bmatrix}
		r^\top & w ^\top & \zeta^\top & \delta_u^\top \end{bmatrix}^\top$, we multiply \eqref{temp33} to the left with 
	\begin{equation*} \label{dcrelmodbis}
		\footnotesize\ba{cc}   \tilde {\bf M} &  \tilde {\bf N} \\ - \big( \bf Y_Q^{\text{diag}}\big)^{-1} {\bf X_Q} &\hspace{-2mm}  \big( \bf Y_Q^{\text{diag}}\big)^{-1} {\bf Y_Q} \ea ^{-1}\hspace{-2mm}=\footnotesize\ba{cr}  \tilde {\bf Y}_{\bf Q} &\hspace{-2mm} -{\bf N} \: \bf Y_Q^{\text{diag}}\\   \tilde {\bf X}_{\bf Q} &  {\bf M} \: \bf Y_Q^{\text{diag}} \ea.
		\normalsize
	\end{equation*}
	Moreover, we have from Fig.~\ref{2BlockAgain} that $v   =u + w$ and that $y  =r -z $. Then, the resulting closed-loop maps will be given by (\ref{finallyhappy}), at the top of the next page. Since $\mathbf{Q}$ is stable, all of them will also be stable, thus guaranteeing internal stability.
	\qed
	\medskip

	\noindent {\bf Proof of Lemma \ref{lem:pole_conserv}} 
	We focus on the case where $\mathbf{G}_2$ is not a constant matrix and $\mathcal{P}_u(\mathbf{G}_1)\neq\{\emptyset\}$ since, otherwise, the result follows directly from the full row rank of $\mathbf{G}_2$, via classical Popov-Belevitch-Hautus (PBH) tests (see Section 3.2 of \cite{zhou}), or from the stability of $\mathbf{G}_1$. We start by expressing $\mathbf{G}_1$ and $\mathbf{G}_2$ via the minimal realizations\stepcounter{equation}
	\begin{equation}\label{eq:min1}
			\mathbf{G}_i(\lambda)=\left[\footnotesize\begin{array}{c|c}
				A_i-\lambda I_{n_i}&B_i\\\hline C_i&D_i
			\end{array}\right],\ i\in\{1,2\},
	\end{equation}
	and we use these in order to write down the realization of
	\begin{equation}\label{eq:min2}
		\mathbf{G}_1(\lambda)\mathbf{G}_2(\lambda)=\left[\footnotesize\begin{array}{cc|c}
			A_1-\lambda I_{n_1}&B_1C_2&B_1D_2\\O&A_2-\lambda I_{n_2}&B_2\\\hline C_1&D_1C_2&D_1D_2
		\end{array}\right],
	\end{equation}
	which we will show to be both stabilizable and detectable.
	
	Since $\mathbf{G}_2$ has full row normal rank and no zeros in $\mathbb{C}\backslash\mathbb{S}$, then by Lemma 3.33 and Theorem 3.34 in \cite{zhou} and from the minimality of \eqref{eq:min1}, we get that $\mathbf{S}(\lambda):=\footnotesize\begin{bmatrix}
		A_2-\lambda I_{n_2}&B_2\\C_2&D_2
	\end{bmatrix}$ has full row rank $\forall\lambda\in\mathbb{C}\backslash\mathbb{S}$. From this and the stabilizability of the pair $(A_1,B_1)$, recalling the minimality of \eqref{eq:min1}, a standard PBH test confirms that \eqref{eq:min2} is also stabilizable.
	
	From the minimality of \eqref{eq:min1}, we have that the pair $(C_1,A_1)$ is detectable and that $\mathcal{P}_u(\mathbf{G}_2)=\{\emptyset\}\Rightarrow\Lambda_u(A_2)=\{\emptyset\}$. By employing a PBH test, we get that \eqref{eq:min2} is also detectable, in addition to being stabilizable, and therefore $\mathcal{P}_u(\mathbf{G}_1\mathbf{G}_2)=\Lambda_u(A_1)\cup\Lambda_u(A_2)$. Yet, from the minimality of \eqref{eq:min1} and by recalling the stability of $\mathbf{G}_2$, we have that $\mathcal{P}_u(\mathbf{G}_1)=\Lambda_u(A_1)$ and that $\Lambda_u(A_2)=\{\emptyset\}$. Thus, $\mathcal{P}_u(\mathbf{G}_1\mathbf{G}_2)=\mathcal{P}_u(\mathbf{G}_1)$.\qed\medskip

	\noindent {\bf Proof of Theorem \ref{thm:implem}} The proof can be broken down into four parts. In the first part, $\mathbf{(I)}$, we prove that $\mathcal{P}_u\left(\begin{bmatrix}
		\mathbf{\Phi}&\mathbf{\Gamma}
	\end{bmatrix}\right)=\textstyle\bigcup_{i=1}^m\mathcal{P}_u(e_i^\top\begin{bmatrix}
		\mathbf{\Phi}&\mathbf{\Gamma}
	\end{bmatrix})$. By employing this fact along with the stabilizable and detectable realizations denoted
	\begin{equation}\label{eq:row_real}
		e_i^\top\begin{bmatrix}
			\mathbf{\Phi} & \mathbf{\Gamma}
		\end{bmatrix}=\left[\footnotesize\begin{array}{c|c}
			A_i-\lambda I_{n_i}& B_i\\\hline C_i&D_i
		\end{array}\right],\ \forall i\in1:m,
	\end{equation}
	we prove in part $\mathbf{(II)}$ the fact that the resulting realization of the controller's row-based NRF implementation, namely
	\begin{equation}\label{eq:full_real}
		\scriptsize\begin{bmatrix}
			e_1^\top\begin{bmatrix}
				\mathbf{\Phi} & \mathbf{\Gamma}
			\end{bmatrix}\\
			\vdots\\
			e_i^\top\begin{bmatrix}
				\mathbf{\Phi} & \mathbf{\Gamma}
			\end{bmatrix}\\
			\vdots\\
			e_m^\top\begin{bmatrix}
				\mathbf{\Phi} & \mathbf{\Gamma}
			\end{bmatrix}
		\end{bmatrix}\hspace{-1mm}=\hspace{-1mm}\left[\scriptsize\begin{array}{ccc|c}
			A_1-\lambda I_{n_1}&&& B_1\\
			&\hspace{-2mm}\ddots&&\vdots\\
			&&\hspace{-2mm}A_m-\lambda I_{n_m}&B_m\\
			\hline C_1&&&D_1\\
			&\hspace{-2mm}\ddots&&\vdots\\
			&&\hspace{-2mm}C_m&D_m
		\end{array}\right],\normalsize
	\end{equation}
	is both stabilizable and detectable. In part $\mathbf{(III)}$, we show that the NRF implementation solves a more general stabilization problem and, in part $\mathbf{(IV)}$, we employ parts $\mathbf{(II)}$ and $\mathbf{(III)}$ to prove that these state-space implementations ensure that the closed-loop system's state dynamics are asymptotically stable.

	$\mathbf{(I)}$ Notice that, since $\mathbf{Y}_{\mathbf{Q}}^{\text{diag}}$ is a diagonal TFM, we have \vspace{-1mm}
	$$\begin{array}{rcl}
		\begin{bmatrix}
			\mathbf{\Phi}&\mathbf{\Gamma}
		\end{bmatrix}&\hspace{-2mm}=&\hspace{-2mm}\begin{bmatrix}
			I_m&O
		\end{bmatrix}-(\mathbf{Y}_{\mathbf{Q}}^{\text{diag}})^{-1}\begin{bmatrix}
			\mathbf{Y}_{\mathbf{Q}} & -\mathbf{X}_{\mathbf{Q}} 
		\end{bmatrix},\\e_i^\top\begin{bmatrix}
			\mathbf{\Phi}&\mathbf{\Gamma}
		\end{bmatrix}&\hspace{-2mm}=&\hspace{-2mm} \begin{bmatrix}
			e_i^\top&O
		\end{bmatrix}-e_i^\top(\mathbf{Y}_{\mathbf{Q}}^{\text{diag}})^{-1}e_ie_i^\top\begin{bmatrix}
			\mathbf{Y}_{\mathbf{Q}} & -\mathbf{X}_{\mathbf{Q}} 
		\end{bmatrix}.\end{array}\vspace{-1mm}$$ 
	Since $\begin{bmatrix}
		-\mathbf{X}_{\mathbf{Q}} & \mathbf{Y}_{\mathbf{Q}} 
	\end{bmatrix}$ is stable and since it satisfies \eqref{dcrelQ}, then $\begin{bmatrix}
		-\mathbf{X}_{\mathbf{Q}}(\lambda) & \mathbf{Y}_{\mathbf{Q}}(\lambda) 
	\end{bmatrix}$ has only finite entries along with full row rank $\forall\lambda\in\mathbb{C}\backslash\mathbb{S}$. This means that $\begin{bmatrix}
		\mathbf{Y}_{\mathbf{Q}} & -\mathbf{X}_{\mathbf{Q}} 
	\end{bmatrix}$ and every $e_i^\top\begin{bmatrix}
		\mathbf{Y}_{\mathbf{Q}} & -\mathbf{X}_{\mathbf{Q}} 
	\end{bmatrix}$ $\forall i\in1:m$ will share this property. Then, it follows that $\begin{bmatrix}
		\mathbf{Y}_{\mathbf{Q}} & -\mathbf{X}_{\mathbf{Q}} 
	\end{bmatrix}$ and every $e_i^\top\begin{bmatrix}
		\mathbf{Y}_{\mathbf{Q}} & -\mathbf{X}_{\mathbf{Q}} 
	\end{bmatrix}$ $\forall i\in1:m$ are stable and that they have full row normal rank along with, by Lemma 3.29 in \cite{zhou}, no transmission zeros in $\mathbb{C}\backslash\mathbb{S}$.
	
	We now apply Lemma \ref{lem:pole_conserv} to both $(\mathbf{Y}_{\mathbf{Q}}^{\text{diag}})^{-1}\begin{bmatrix}
		\mathbf{Y}_{\mathbf{Q}} & -\mathbf{X}_{\mathbf{Q}} 
	\end{bmatrix}$ and $e_i^\top(\mathbf{Y}_{\mathbf{Q}}^{\text{diag}})^{-1}e_ie_i^\top\begin{bmatrix}
		\mathbf{Y}_{\mathbf{Q}} & -\mathbf{X}_{\mathbf{Q}} 
	\end{bmatrix}$ in order to get that\vspace{-1mm}\footnotesize
	$$\begin{array}{rcl}
	\mathcal{P}_u\left(\begin{bmatrix}
		\mathbf{\Phi}&\hspace{-1mm}\mathbf{\Gamma}
	\end{bmatrix}\right)=\mathcal{P}_u(\begin{bmatrix}
			I_m&\hspace{-1mm}O
		\end{bmatrix}-\begin{bmatrix}
			\mathbf{\Phi}&\hspace{-1mm}\mathbf{\Gamma}
		\end{bmatrix})&\hspace{-3mm}=&\hspace{-3mm}\mathcal{P}_u((\mathbf{Y}_{\mathbf{Q}}^{\text{diag}})^{-1}),\\
		\mathcal{P}_u(e_i^\top\begin{bmatrix}
			\mathbf{\Phi}&\hspace{-1mm}\mathbf{\Gamma}
		\end{bmatrix})=\mathcal{P}_u(\begin{bmatrix}
			e_i^\top&\hspace{-1mm}O
		\end{bmatrix}-e_i^\top\begin{bmatrix}
			\mathbf{\Phi}&\hspace{-1mm}\mathbf{\Gamma}
		\end{bmatrix})&\hspace{-3mm}=&\hspace{-3mm}\mathcal{P}_u(e_i^\top(\mathbf{Y}_{\mathbf{Q}}^{\text{diag}})^{-1}e_i).
	\end{array}\vspace{-1mm}$$\normalsize
	Moreover, from the diagonal structure of $\mathbf{Y}_{\mathbf{Q}}^{\text{diag}}$, it is straightforward to obtain, using classical state-space theory, that $\mathcal{P}_u((\mathbf{Y}_{\mathbf{Q}}^{\text{diag}})^{-1})=\textstyle\bigcup_{i=1}^m\mathcal{P}_u(e_i^\top(\mathbf{Y}_{\mathbf{Q}}^{\text{diag}})^{-1}e_i)$. This, in turn, yields the fact that
	$\mathcal{P}_u\left(\begin{bmatrix}
		\mathbf{\Phi}&\mathbf{\Gamma}
	\end{bmatrix}\right)=\textstyle\bigcup_{i=1}^m\mathcal{P}_u(e_i^\top\begin{bmatrix}
		\mathbf{\Phi}&\mathbf{\Gamma}
	\end{bmatrix}).$

	$\mathbf{(II)}$ We now turn to the stabilizable and detectable realizations for each $e_i^\top\begin{bmatrix}
		\mathbf{\Phi} & \mathbf{\Gamma}
	\end{bmatrix}$ given in \eqref{eq:row_real}. By the stabilizability and detectability of these realizations, we have that $\mathcal{P}_u(e_i^\top\begin{bmatrix}
		\mathbf{\Phi}&\mathbf{\Gamma}
	\end{bmatrix})=\Lambda_u(A_i)$. Then, we employ these realizations to form the one from \eqref{eq:full_real} and we define $A_{\mathbf{K}}:=\text{diag}(A_1,\cdots,A_m)$,
	which is precisely the state matrix of the realization from \eqref{eq:full_real}. By employing the fact that $A_{\mathbf{K}}$ is block-diagonal, we get that $\Lambda_u(A_{\mathbf{K}})=\textstyle\bigcup_{i=1}^m\Lambda_u(A_i)$, which implies $\Lambda_u(A_{\mathbf{K}})=\textstyle\bigcup_{i=1}^m\mathcal{P}_u(e_i^\top\begin{bmatrix}
		\mathbf{\Phi}&\mathbf{\Gamma}
	\end{bmatrix})=\mathcal{P}_u\left(\begin{bmatrix}
		\mathbf{\Phi}&\mathbf{\Gamma}
	\end{bmatrix}\right).$
	Therefore, it follows (by standard state-space theory) that the realization from \eqref{eq:full_real} must be both detectable and stabilizable.

	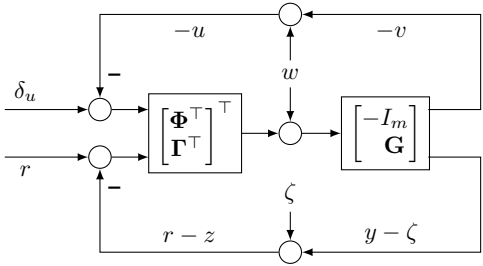
\begin{figure}
		\centering
		\begin{tikzpicture}[scale=.9, every node/.style={transform shape}]
			\node(n1)at(0,0){};
			\node(n2)[right of=n1, node distance=4em]{};
			\node(n21)[circle,draw,minimum height=8,above of=n2, node distance=1em]{};
			\node(n22)[circle,draw,minimum height=8,below of=n2, node distance=1em]{};
			\node(n3)[minimum height=0.5cm, minimum width=0.5cm,draw,right of=n2, node distance=4em]{$ \begin{bmatrix}
					\mathbf{\Phi}^\top\\\mathbf{\Gamma}^\top
				\end{bmatrix}^\top $};
			\node(n4)[circle,draw,minimum height=8,right of=n3, node distance=4em]{};
			\node(n41)[circle,draw,minimum height=8,above of=n4, node distance=5em]{};
			\node(n42)[circle,draw,minimum height=8,below of=n4, node distance=5em]{};
			\node(n42m)[below of=n4, node distance=2.5em]{$\zeta$};
			\node(n41m)[above of=n4, node distance=2.5em]{$w$};
			\node(n5)[minimum height=0.5cm, minimum width=0.5cm,draw,right of=n4, node distance=4em]{$\begin{bmatrix}
					-I_m\\\phantom{-}\mathbf{G}
				\end{bmatrix}$};
			
			\draw[-latex](n41m)--(n4) node[near end, right]{$ $};
			\draw[-latex](n42m)--(n42) node[near end, left]{$ $};
			\draw[-latex](n41m)--(n41) node[near end, left]{$ $};
			
			\draw[-latex](n3)--(n4) node[near end, above]{$ $};
			\draw[-latex](n4)--(n5);
			
			\draw[-latex](17.8em,1em)--(20em,1em)|-(n41) node[near end, below]{$-v$} node[very near end, below]{\hspace{-2em}$ $};
			\draw[-latex](17.8em,-1em)--(20em,-1em)|-(n42) node[near end, above]{$y-\zeta$} node[very near end, above]{\hspace{-2em}$ $};
			
			\draw[-latex](n41)-|(n21) node[near start, below]{$-u$} node[very near end, right]{\bf{--}};
			\draw[-latex](n42)-|(n22) node[near start, above]{$r-z$} node[very near end, right]{\bf{--}};
			
			\draw[-latex](n21)--(6.05em,1em);
			\draw[-latex](n22)--(6.05em,-1em);
			
			\draw[-latex](0em,1em)--(n21) node[near start,above]{$\delta_u$} node[near end, above]{$ $};
			\draw[-latex](0em,-1em)--(n22) node[near start,below]{$r$} node[near end, below]{$ $};
		\end{tikzpicture}\vspace{-1mm}
		\caption{Equivalent negative unity feedback interconnection\vspace{-5mm}}\label{fig:eq_con}
	\end{figure}
\begin{figure*}
	\begin{equation}\label{eq:spaghet}
		\begin{bmatrix}
			\sigma x_{\mathbf{G}}\\\sigma x_{\mathbf{K}}
		\end{bmatrix}\hspace{-1mm}=A_{CL}\begin{bmatrix}
			x_{\mathbf{G}}\\ x_{\mathbf{K}}
		\end{bmatrix}\hspace{-1mm}+\hspace{-1mm}\footnotesize\begin{bmatrix}
			O&O&B\\-B_{\mathbf{K}1}&-B_{\mathbf{K}2}&O
		\end{bmatrix}\footnotesize\hspace{-1mm}\widetilde{D}^{-1}\hspace{-1mm}\begin{bmatrix}
			O&I_m&O&-I_m&O&O\\
			-I_p&O&I_p&O&I_p&O\\
			O&I_m&O&O&O&I_m
		\end{bmatrix}\begin{bmatrix}
			r^\top & w^\top & \zeta^\top & \delta_u^\top & \delta_{y_{\mathbf{G}}}^\top & \delta_{y_{\mathbf{K}}}^\top \end{bmatrix}^\top\hspace{-1mm}\normalsize+\begin{bmatrix}
			\delta_{x_{\mathbf{G}}} \\ \delta_{x_{\mathbf{K}}}
		\end{bmatrix}\tag{32}
		\normalsize\end{equation}\hrulefill
\end{figure*}
	
	$\mathbf{(III)}$ We now show that $\begin{bmatrix}
		\mathbf{\Phi}&\mathbf{\Gamma}
	\end{bmatrix}$ internally stabilizes $\begin{bmatrix}
		-I_m&\mathbf{G}^\top
	\end{bmatrix}\hspace{-1mm}\phantom{.}^\top$ in standard unity configuration (recall Fig.~\ref{2Block}). Moreover, we point out that this fact is a sufficient condition for the feedback configuration from Fig.~\ref{fig:eq_con}, which is equivalent to the one in Fig.~\ref{2BlockAgain}, to be internally stable.
	
	Note that $\begin{bmatrix}
		\mathbf{\Phi}&\mathbf{\Gamma}
	\end{bmatrix}$ internally stabilizes $\begin{bmatrix}
		-I_m&\mathbf{G}^\top
	\end{bmatrix}\hspace{-1mm}\phantom{.}^\top$ if and only if (see Lemma 5.3 of \cite{zhou} for the continuous-time positive feedback case) all the entries of the following TFM
	\begin{equation}\label{eq:Ht}
		\widetilde{\mathbf{H}}_{CL}:=\footnotesize\begin{bmatrix}
			\phantom{-}I_m\\-I_m\\\phantom{-}\mathbf{G}
		\end{bmatrix}\left(I_m+\begin{bmatrix}
			\mathbf{\Phi}&\mathbf{\Gamma}
		\end{bmatrix}\begin{bmatrix}
			-I_m\\\phantom{-}\mathbf{G}
		\end{bmatrix}\right)^{-1}\begin{bmatrix}
			I_m&\mathbf{\Phi}&\mathbf{\Gamma}
		\end{bmatrix}\normalsize
	\end{equation}
	are stable. Recall \eqref{EqYoula4}-\eqref{YoulaEq} along with \eqref{ISmain1}-\eqref{ISmain2} to get that 
	\begin{equation*}
		\widetilde{\mathbf{H}}_{CL}=
			\footnotesize\begin{bmatrix}
				\mathbf{M}^\top&-\mathbf{M}^\top&\mathbf{N}^\top
			\end{bmatrix}^\top\begin{bmatrix}
				\mathbf{Y}_{\mathbf{Q}}^{{\text{diag}}}&\mathbf{Y}_{\mathbf{Q}}^{{\text{diag}}}-\mathbf{Y}_{\mathbf{Q}}&\mathbf{X}_{\mathbf{Q}}
			\end{bmatrix}
		\end{equation*}
		has only stable entries. Therefore, $\begin{bmatrix}
			\mathbf{\Phi}&\mathbf{\Gamma}
		\end{bmatrix}$ internally stabilizes $\begin{bmatrix}
			-I_m&\mathbf{G}^\top
		\end{bmatrix}\hspace{-1mm}\phantom{.}^\top$ which also confirms, recalling \eqref{finallyhappy} and the equivalence with Fig.~\ref{2BlockAgain}, that Fig.~\ref{fig:eq_con} is indeed internally stable.
		
		$\mathbf{(IV)}$ Finally, recall from the result's statement that the plant is described by a stabilizable and detectable realization
		\begin{subequations}
			\begin{align} 
				\sigma x_\mathbf{G}  &=  A x_\mathbf{G}  +  Bu_\mathbf{G},\label{eq:plt_ss1} \\ 
				y_\mathbf{G} &=  C  x_\mathbf{G}  + D u_\mathbf{G},\label{eq:plt_ss2}
			\end{align}
		\end{subequations}
		and then denote the realization of $\begin{bmatrix}
			\mathbf{\Phi}&\mathbf{\Gamma}
		\end{bmatrix}$ from \eqref{eq:full_real} as follows\vspace{-5mm}
		\begin{subequations}
			\begin{align} 
				\sigma x_\mathbf{K}  &=  A_{\mathbf{K}} x_{\mathbf{K}}  +  B_1u_{\mathbf{K}1}+B_2u_{\mathbf{K}2},\label{eq:ctl_ss1} \\ 
				y_{\mathbf{K}} &=  C_{\mathbf{K}}  x_{\mathbf{K}}  + D_1 u_{\mathbf{K}1}+D_2u_{\mathbf{K}2}.\label{eq:ctl_ss2}
			\end{align}
		\end{subequations}
		With these representations, it is straightforward to check by direct substitution that the closed-loop state dynamics of type \eqref{ss0ab}-\eqref{ss0c} in Fig.~\ref{2BlockAgain} are described by the same set of equations (with respect to the same sets of inputs and outputs) as those in Fig.~\ref{fig:eq_con} when $\begin{bmatrix}
			-I_m&\mathbf{G}^\top
		\end{bmatrix}\hspace{-1mm}\phantom{.}^\top$ is described by the realization\vspace{-5mm}
		\begin{subequations}
			\begin{align} 
				\sigma x_\mathbf{G}  &=  A x_\mathbf{G}  +  Bu_\mathbf{G},\label{eq:plt_ex_ss1} \\ 
				-u_\mathbf{G} &=  O  x_\mathbf{G}  - I_m u_\mathbf{G},\label{eq:plt_ex_ss2}\\
				y_\mathbf{G} &=  C  x_\mathbf{G}  + D u_\mathbf{G}.\label{eq:plt_ex_ss3}
			\end{align}
		\end{subequations}
		Note that, since \eqref{eq:plt_ss1}-\eqref{eq:plt_ss2} is stabilizable and detectable, then so is \eqref{eq:plt_ex_ss1}-\eqref{eq:plt_ex_ss3} and that these two realizations share the same state variables, \emph{i.e.}, the components of the plant's state vector.
		
		Consider the state vector of the closed-loop interconnections in both Fig.~\ref{2BlockAgain} and in Fig.~\ref{fig:eq_con} to be the concatenation of the plant's state vector, $x_\mathbf{G}$, with the distributed controller's state vector, $x_{\mathbf{K}}$. Following this, define
		$\widetilde{D}:=\tiny\begin{bmatrix}
			I_m&O&\phantom{-}I_m\\
			O&I_p&-D\\
			D_{\mathbf{K}1}&D_{\mathbf{K}2}&\phantom{-}I_m
		\end{bmatrix}.$
		To show that $\widetilde{D}$ is invertible, we compute the Schur complement of its upper left $(m+p)\times(m+p)$ block and obtain
		\begin{multline}\label{eq:SC}
			I_m+\begin{bmatrix}
				D_{\mathbf{K}1}&\hspace{-1mm}D_{\mathbf{K}2}
			\end{bmatrix}\begin{bmatrix}
				-I_m&\hspace{-1mm}D^\top
			\end{bmatrix}\hspace{-1mm}\phantom{.}^\top=\\=I_m+\begin{bmatrix}
				\mathbf{\Phi}(\infty)&\hspace{-1mm}\mathbf{\Gamma}(\infty)
			\end{bmatrix}\begin{bmatrix}
				-I_m&\hspace{-1mm}\mathbf{G}^\top(\infty)
			\end{bmatrix}\hspace{-1mm}\phantom{.}^\top.
		\end{multline}
		Since $I_m+\begin{bmatrix}
			\mathbf{\Phi}&\mathbf{\Gamma}
		\end{bmatrix}\begin{bmatrix}
			-I_m&\mathbf{G}^\top
		\end{bmatrix}^\top$ is proper and its inverse is, by point $\mathbf{(III)}$, both proper and stable, then the Schur complement from \eqref{eq:SC} is invertible, implying 
		that $\widetilde{D}$ is also invertible. We now combine \eqref{eq:plt_ss1}-\eqref{eq:plt_ss2} and \eqref{eq:plt_ex_ss1}-\eqref{eq:plt_ex_ss3} with \eqref{eq:ctl_ss1}-\eqref{eq:ctl_ss2} to get the fact that the closed-loop interconnection's realization in both Fig.~\ref{2BlockAgain} and in Fig.~\ref{fig:eq_con} has the following state matrix\vspace{-1mm}
		\begin{equation}
			A_{CL}=\footnotesize\begin{bmatrix}
				A&O\\O&A_{\mathbf{K}}
			\end{bmatrix}+\footnotesize\begin{bmatrix}
				O&O&B\\-B_{\mathbf{K}1}&-B_{\mathbf{K}2}&O
			\end{bmatrix}\hspace{-1mm}\footnotesize\widetilde{D}^{-1}\hspace{-1mm}\footnotesize\begin{bmatrix}
				O&O\\C&O\\O&C_{\mathbf{K}}
			\end{bmatrix}\hspace{-1mm}.\label{eq:A_CL}\normalsize
		\end{equation}
		
		Since the TFM from \eqref{eq:Ht} is stable, while \eqref{eq:plt_ex_ss1}-\eqref{eq:plt_ex_ss3} and \eqref{eq:ctl_ss1}-\eqref{eq:ctl_ss2} are stabilizable and detectable, we apply the negative unity feedback versions of Lemmas 5.2 and 5.3 from \cite{zhou} to the feedback loop from Fig.~\ref{fig:eq_con} (directly in the continuous-time case and in adapted form for discrete-time) to get that $\Lambda_u(A_{CL})=\{\emptyset\}$. This is equivalent, by Definition 5.2 in \cite{zhou} (with the appropriate alteration for the discrete-time case), to the desired result, \emph{i.e.}, the two state vectors, $x_{\mathbf{G}}$ and $x_{\mathbf{K}}$, in Fig.~\ref{2BlockAgain} or in Fig.~\ref{fig:eq_con} are driven asymptotically to the zero vector, when evolving freely from any finite initial conditions.		\qed\medskip

		\noindent {\bf Proof of Corollary~\ref{cor:delta}} We start by recalling the realizations from \eqref{eq:plt_ss1}-\eqref{eq:plt_ss2} and \eqref{eq:ctl_ss1}-\eqref{eq:ctl_ss2}, along with the matrix $A_{CL}$ from \eqref{eq:A_CL}. As discussed in part $\mathbf{(IV)}$ of the proof belonging to Theorem \ref{thm:implem}, the closed-loop state dynamics of type \eqref{ss0ab} in Fig.~\ref{2BlockAgain}, when considering the closed-loop state variables to be the concatenation of $x_{\mathbf{G}}$ (the state vector of $\mathbf{G}$) and of $x_{\mathbf{K}}$ $\big($the state vector of $\begin{bmatrix}
			\mathbf{\Phi}&\mathbf{\Gamma}
		\end{bmatrix}$$\big)$, are given by \eqref{eq:spaghet} at the top of this page. Since $\Lambda_u(A_{CL})=\{\emptyset\}$, the closed-loop system in Fig.~\ref{2BlockAgain} is guaranteed to be internally stable, even when taking into account the bounded disturbances from the statement.\qed\medskip

		\noindent {\bf Proof of Theorem~\ref{MR3}} The proof boils down to showing the fact that the closed-loop map from the input disturbance $w$ to the controller's state $\beta$, denoted $\mathbf{T}_{\mathbf{Q}}^{\beta w}$, is unstable if so is $\mathbf{G}$.
		
		Begin with the closed-loop equations, which are given by\stepcounter{equation}
		\begin{subequations}
			\begin{align}
				&\tilde {\bf M} \: z + \tilde {\bf N} \:u   = - \tilde  {\bf N} \:w + \tilde  {\bf M}  \: r  - \tilde  {\bf M}  \: \zeta,  \label{miezu10}\\
				&{\bf \Omega}^{-1}(I_p-\tilde {\bf Y}_{\bf Q}  \tilde {\bf M})z + {\bf \Omega}^{-1} \tilde {\bf Y}_{\bf Q}  \tilde {\bf M} \beta  =  (I_p- {\bf \Omega}^{-1} \tilde {\bf Y}_{\bf Q}  \tilde {\bf M} )\delta_\beta,  \label{miezu11}\\
				&\tilde {\bf X}_{\bf Q}  \tilde {\bf M}\;z -  \tilde {\bf X}_{\bf Q}  \tilde {\bf M}\; \beta -u =  \; O, \label{miezu12}
			\end{align}
		\end{subequations}\normalsize
		where (\ref{miezu11})-(\ref{miezu12}) represent the distributed implementation of the controller. Next, multiply (\ref{miezu10}) to the left with $\tilde {\bf Y}_{\bf Q}$ and rewrite (\ref{miezu10})-(\ref{miezu12}) in matrix form in order to get that
		\begin{multline} \label{cutata}
			\hspace{-2mm}\footnotesize\ba{ccc} \tilde {\bf Y}_{\bf Q} \tilde {\bf M} &  O &   \tilde {\bf Y}_{\bf Q}  \tilde {\bf N} \\
			{\bf \Omega}^{-1}(I_p-\tilde {\bf Y}_{\bf Q}  \tilde {\bf M}) & {\bf \Omega}^{-1} \tilde {\bf Y}_{\bf Q}  \tilde {\bf M} & O \\
			\tilde {\bf X}_{\bf Q}  \tilde {\bf M} & -  \tilde {\bf X}_{\bf Q}\tilde {\bf M} & -I_m \ea \small\ba {c} z \\ \beta \\ u \ea =\\ 
			\footnotesize\ba{cccc} - \tilde {\bf Y}_{\bf Q} \tilde {\bf N} & \tilde {\bf Y}_{\bf Q} \tilde {\bf M}  & - \tilde  {\bf Y}_{\bf Q}  \tilde {\bf M} & O \\
			O & O & O & I_p- {\bf \Omega}^{-1}\tilde {\bf Y}_{\bf Q}  \tilde {\bf M}\\
			O & O & O & O \
			\ea
			\scriptsize\ba{c} w \\ r \\ \zeta \\ \delta_\beta \ea\hspace{-1mm}.\hspace{-2mm}\normalsize
		\end{multline}
		
		The expression of the closed-loop maps can be obtained by multiplying (\ref{cutata}) to the left with the inverse of the square TFM on the left-hand side. 
		Doing so, we get the fact that $\mathbf{T}_{\mathbf{Q}}^{\beta w}=\mathbf{G}\tilde {\bf X}_{\bf Q} \tilde {\bf N}$ and by employing the identity 
		$\tilde {\bf X}_{\bf Q} \tilde {\bf N}=I_m-\mathbf{MY_Q}$, which can be
		deduced from \eqref{dcrelQ}, we finally obtain the fact that
		$$
		\mathbf{T}_{\mathbf{Q}}^{\beta w}=\mathbf{G}(I_m-\mathbf{MY_Q})=\mathbf{G}-\mathbf{N}\mathbf{M}^{-1}\mathbf{MY_Q}=\mathbf G-\mathbf{NY_Q}.
		$$
		
		If ${\bf G}$ is unstable, then so is $\mathbf{T}_{\mathbf{Q}}^{\beta w}$ for any stable $\mathbf{Q}$, since $\mathbf{NY_Q}$ is guaranteed to be stable, thus making the implementations in \eqref{redSLS1}-\eqref{redSLS2} unable to stabilize the feedback loop.
		\qed

		
		\bibliographystyle{IEEEtran}
		\bibliography{manuscript} 
		
	\end{document}